\documentclass{amsart}

\usepackage{amsmath,amssymb,amsthm} 
\usepackage{bbm}
\usepackage{stmaryrd}
\usepackage{graphicx}

\DeclareMathAlphabet\scr{U}{scr}{m}{n}
\SetMathAlphabet\scr{bold}{U}{scr}{b}{n}
  \DeclareFontFamily{U}{scr}{\skewchar\font'177}%
  \DeclareFontShape{U}{scr}{m}{n}{<-6>rsfs5<6-8>rsfs7<8->rsfs10}{}%
  \DeclareFontShape{U}{scr}{b}{n}{<-6>rsfs5<6-8>rsfs7<8->rsfs10}{}%

\newtheorem{theorem}{Theorem}[section]  

\newtheorem{standingassumption}[theorem]{Standing Assumption}

\newtheorem{corollary}[theorem]{Corollary}

\newtheorem{definition}[theorem]{Definition}
\newtheorem{lemma}[theorem]{Lemma}
\newtheorem{proposition}[theorem]{Proposition}

\theoremstyle{definition}

\newtheorem{remark}[theorem]{Remark}

\numberwithin{equation}{section}

\begin{document}

\title{Asymptotics and Duality for the \linebreak Davis and Norman Problem}
\author{Stefan Gerhold}
\address{Technische Universit\"at Wien, Institut f\"ur Wirtschaftsmathematik\endgraf Wiedner Hauptstrasse 8-10, A-1040 Wien, Austria}
\email{sgerhold@fam.tuwien.ac.at}
\author{Johannes Muhle-Karbe}
\address{ETH Z\"urich, Departement f\"ur Mathematik\endgraf R\"amistrasse 101, CH-8092, Z\"urich, Switzerland}
\email{johannes.muhle-karbe@math.ethz.ch}
\author{Walter Schachermayer}
\address{Universit\"at Wien, Fakult\"at f\"ur Mathematik\endgraf Nordbergstrasse 15, A-1090 Wien, Austria}
\email{walter.schachermayer@univie.ac.at}
\date{\today}

\keywords{Transaction costs, optimal consumption, shadow price, asymptotics}
\subjclass[2000]{91B28, 91B16, 60H10}
\thanks{We thank Paolo Guasoni, Mete Soner, and, in particular, Steve Shreve for valuable discussions and comments. We are also grateful to an anonymous referee for his/her careful reading of the manuscript. The first author was partially supported by the Austrian Federal Financing Agency and the Christian-Doppler-Gesellschaft (CDG). The second author gratefully acknowledges partial support by the National Centre of Competence in Research ``Financial Valuation and Risk Management'' (NCCR FINRISK), Project D1 (Mathematical Methods in Financial Risk Management), of the Swiss National Science Foundation (SNF). The third author was partially supported by the Austrian Science Fund (FWF) under grant P19456, the European Research Council (ERC) under grant FA506041, the Vienna Science and Technology Fund (WWTF) under grant MA09-003, and by the Christian-Doppler-Gesellschaft (CDG).}

\begin{abstract}
We revisit the problem of maximizing expected logarithmic utility from consumption over an infinite horizon in the Black-Scholes model with proportional transaction costs, as studied in the seminal paper of Davis and Norman [\emph{Math.\ Operation Research},\ 15, 1990]. Similarly to Kallsen and Muhle-Karbe [\emph{Ann.\ Appl.\ Probab.},\ 20, 2010], we tackle this problem by determining a \emph{shadow price}, that is, a frictionless price process with values in the bid-ask spread which leads to the same optimization problem. However, we use a different parametrization, which facilitates computation and verification. Moreover, for small transaction costs, we determine fractional Taylor expansions of arbitrary order for the boundaries of the no-trade region and the value function. This extends work of Jane{\v{c}}ek and Shreve [\emph{Finance Stoch.},\ 8, 2004], who determined the leading terms of these power series. 
\end{abstract}

\maketitle

\section{Introduction}
It is a classical problem of financial theory to maximize expected utility from consumption (cf., e.g., \cite{karatzas.shreve.98, merton.90} and the references therein). This is often called the \emph{Merton problem}, because it was first formulated and solved in a continuous-time setting by Merton~\cite{merton.69, merton.71}. More specifically, he found that -- for logarithmic or power utility and one risky asset following geometric Brownian motion -- it is optimal to keep the fraction of wealth invested into stocks equal to a constant $\theta$, which is known explicitly in terms of the model parameters.

Magill and Constantinides \cite{constantinidis.magill.76} extended Merton's setting to incorporate proportional transaction costs. In particular, they showed that -- again for logarithmic or power utility -- it is optimal to engage in the minimal amount of trading necessary to keep the fraction of wealth in stocks inside some \emph{no-trade region} $[\underline{\theta},\overline{\theta}]$ around $\theta$. 

Their somewhat heuristic derivation was made rigorous in the seminal paper of Davis and Norman \cite{davis.norman.90}, who also showed how to compute $\underline{\theta}, \overline{\theta}$ by solving a free boundary problem. 

Using the theory of viscosity solutions, this was further generalized by Shreve and Soner \cite{shreve.soner.94}. Moreover, the same approach allowed Jane{\v{c}}ek and Shreve \cite{janecek.shreve.04} to derive rigorous first-order asymptotic expansions of $\underline{\theta}, \overline{\theta}$ and the value function for small transaction costs (also cf.\ \cite{constantinidis.86, shreve.soner.94, rogers.04,whalley.wilmott.97} for related asymptotic results).

All of the above papers use methods from the theory of stochastic control. On the other hand, dating back to the pioneering papers of Jouini and Kallal \cite{jouini.kallal.95} and Cvitani\'c and Karatzas \cite{cvitanic.karatzas.96}, much of the general theory for markets with transaction costs is formulated in terms of duality theory leading to \emph{consistent price systems} resp.\ \emph{shadow prices}. These are frictionless markets evolving within the bid-ask spread of the original market with transaction costs, which lead to an equivalent dual optimization problem. For logarithmic utility, Kallsen and Muhle-Karbe \cite{kallsen.muhlekarbe.10} solved the Merton problem with transaction costs by simultaneously determining a shadow price and its optimal portfolio, i.e., by determining the dual and the primal optimizer at the same time (also compare \cite{kuehn.stroh.10} for an extension of this approach to a limit order market). 

In the present article, we revisit this problem using a different parametrization in the spirit of \cite{gerhold.al.10}. We employ arguments that are tailor-made for sufficiently small transaction costs, i.e., which focus on asymptotic expansions and thus seem promising for tackling more complicated models. In particular, we determine rigorous asymptotic expansions both for the boundaries $\underline{\theta}, \overline{\theta}$ of the no-trade region and the corresponding value function, where terms of arbitrary order can be algorithmically computed. This extends the first-order expansions of \cite{janecek.shreve.04}. However, we emphasize that we only consider the particularly simple case of logarithmic utility, unlike \cite{janecek.shreve.04}, who also deal with the more involved case of power utilities. An extension of the present approach to power utilities is subject of current research.

The remainder of this article is organized as follows. After introducing the Merton problem with transaction costs in Section 2, we present a heuristic derivation of our candidate shadow price process. Subsequently, in Section 4, we prove that this candidate is indeed well-defined and a shadow price process for sufficiently small transaction costs. Moreover, we derive asymptotic expansions for the boundaries of the no-trade region. Afterwards, we also determine an asymptotic expansion for the value function. Finally, in Section 6, we show that the shadow price corresponds to the solution of the dual problem.

\section{The Merton Problem with transaction costs}\label{s:setup}
We study the problem of maximizing expected logarithmic utility from consumption over an infinite horizon in the presence of proportional transaction costs, as in \cite{davis.norman.90, kallsen.muhlekarbe.10,shreve.soner.94}. Fix a filtered probability space $(\Omega,\scr{F},(\scr{F}_t)_{t\in \mathbb{R}_+},P)$, whose filtration is generated by a standard Brownian motion $(W_t)_{t\in \mathbb{R}_+}$. We consider a market with two investment opportunities, namely a bond and a stock. The \emph{ask price process} $S$ of the stock is supposed to follow the Black-Scholes model
\begin{equation}\label{eq:bs}
dS_t/S_t=\mu dt +\sigma dW_t, \quad S_0>0,\quad \mu,\sigma>0,
\end{equation}
and the \emph{bid price process} is assumed to be given by $(1-\lambda)S$ for some $\lambda \in (0,1)$\footnote{This notation, also used in~\cite{taksar.al.88}, turns out to be convenient in the sequel. It is equivalent to the usual setup with the same constant proportional transaction costs for purchases and sales (compare, e.g.,~\cite{davis.norman.90, janecek.shreve.04, kallsen.muhlekarbe.10, shreve.soner.94}). Indeed, set $\check{S}=\frac{2-\lambda}{2}S$ and $\check{\lambda}=\frac{\lambda}{2-\lambda}$. Then $((1-\lambda)S,S)$ coincides with $((1-\check{\lambda})\check{S},(1+\check{\lambda})\check{S})$. Conversely, any bid-ask process $((1-\check{\lambda})\check{S},(1+\check{\lambda})\check{S})$ with $\check{\lambda} \in (0,1)$ equals $((1-\lambda)S,S)$ for $S=(1+\check{\lambda})\check{S}$ and $\lambda=\frac{2\check{\lambda}}{1+\check{\lambda}}$.}. This means that one has to pay the higher
price $S_t$ when purchasing the stock at time~$t$, but only receives the lower price $(1-\lambda)S_t$ when selling it. The price of the bond is supposed to be constant and equal to $S^0=1$. As in \cite{davis.norman.90, kallsen.muhlekarbe.10}, we make the following standing assumption, which ensures that the holdings in bond and stock remain positive at all times.

\begin{standingassumption}
$$ 0< \theta:= \frac{\mu}{\sigma^2}<1.$$
\end{standingassumption}

The case $\theta>1$ can be dealt with using minor modifications of the present approach (cf.\ \cite{gerhold.al.10}). The degenerate case $\theta=1$ has to be treated separately, see \cite{janecek.shreve.04, shreve.soner.94}.

Since transactions of infinite variation lead to immediate bankruptcy (compare \cite{campi.schachermayer.06}), we confine ourselves to the following set of trading strategies.

\begin{definition}A \emph{trading strategy} is an $\mathbb{R}^2$-valued predictable  process $(\varphi^0,\varphi)$ of finite variation. Here, $(\varphi^0_{0-}, \varphi_{0-})=(\eta_B,\eta_S)\in \mathbb{R}^2_+ \backslash \{0,0\}$ represents the \emph{initial endowment} in bonds and stocks, whereas $\varphi_t^0$ and $\varphi_t$ denote the units held in bond and in stock at time~$t$, respectively. A (discounted) \emph{consumption rate} is an $\mathbb{R}_+$-valued, adapted stochastic process~$\kappa$ satisfying $\int_0^t \kappa_s ds<\infty$ a.s.\ for all $t \geq 0$. A pair $((\varphi^0,\varphi),\kappa)$ of a trading strategy $(\varphi^0,\varphi)$ and a consumption rate $\kappa$ is called \emph{portfolio/consumption process}.
\end{definition}

To capture the notion of a self-financing strategy, we use the intuition that no funds are added or withdrawn. As in \cite{kallsen.muhlekarbe.10}, this leads to the following notion.

\begin{definition}\label{defi:selffinancing}
A portfolio/consumption process $((\varphi^0,\varphi),\kappa)$ is called \emph{self-financing}, if
\begin{equation}\label{e:selff2}
\varphi^0=\varphi^0_{0-}+\int_0^\cdot (1-\lambda) S_{t} d\varphi^{\downarrow}_t - \int_0^\cdot S_{t} d\varphi^{\uparrow}_t-\int_0^\cdot \kappa_t dt,
\end{equation}
where $\varphi=\varphi^{\uparrow}-\varphi^{\downarrow}$ for increasing predictable processes $\varphi^{\uparrow},\varphi^{\downarrow}$ which do not grow at the same time.
\end{definition}

Note that since~$S$ is continuous and~$\varphi$ is of finite variation, integration by parts yields that this definition coincides with the usual notion of self-financing strategies in the absence of transaction costs if we let $\lambda=0$.

The subsequent definition requires the investor to be solvent at all times. For frictionless markets, i.e., if $\lambda=0$, this coincides with the usual notion of non-negativity of wealth.
\begin{definition}\label{def:opti}
A self-financing portfolio/consumption process $((\varphi^0,\varphi),\kappa)$ is called \emph{admissible}, if the \emph{liquidation wealth process} 
$$V_t(\varphi^0,\varphi):=\varphi_t^0+\varphi_t^+(1-\lambda)S_t-\varphi_t^-S_t, \quad t\geq 0, $$
of $(\varphi^0,\varphi)$ is a.s.\ nonnegative. 
\end{definition}

We consider the following classical investment/consumption problem.

\begin{definition}[Primal problem]
Given an initial endowment $(\eta_B,\eta_S)$, an admissible portfolio/consumption process $((\varphi^0,\varphi),\kappa)$ is called \emph{optimal} if it attains the maximum in
\begin{equation}\label{e:DiscountedUtility}
u_{(1-\lambda)S,S}(\eta_B,\eta_S):=\sup_{((\varphi^0,\varphi),\kappa)}E\left(\int_0^{\infty} e^{-\delta t} \log(\kappa_t)dt\right)
\end{equation}
over all admissible portfolio/consumption processes $((\varphi^0,\varphi),\kappa)$. Here $\delta>0$ denotes a fixed given \emph{impatience rate}.
\end{definition}

Note that since $\delta>0$, the value function of the Merton problem \emph{without} transaction costs is finite by \cite[Theorem 2.1]{davis.norman.90}. This upper bound for the value function in the present setup with transaction costs then implies that the latter is finite, since it is also bounded from below by the finite utilities that can be obtained from liquidating at time zero and consuming at a sufficiently small constant relative rate.

The dual problem corresponding to \eqref{e:DiscountedUtility} can be defined as follows (compare \cite{cvitanic.karatzas.96}). 

\begin{definition}[Dual problem]\label{def:dual}
Denote by $\tilde{U}(t,y)=-e^{-\delta t}(\log(e^{\delta t} y)+1)$ the \emph{conjugate function} of the utility function $U(t,x)=e^{-\delta t}\log(x)$, i.e., the Legendre transform of $-U(t,-x)$. Then a pair $(\tilde{S},Z)$ is called \emph{dual optimizer}, if it attains the minimum in
\begin{equation}\label{e:dual}
v_{(1-\lambda)S,S}(\eta_B,\eta_S):=\inf_{(\tilde{S},Z)} E\left(\int_0^\infty \tilde{U}\left(t,\frac{Z_t}{\delta(\eta_B+\eta_S \bar{S}_0)}\right) dt\right)+\frac{1}{\delta}.
\end{equation}
Here $(\tilde{S},Z)$ runs through the set of all continuous semimartingales $\tilde{S}$ with values in the bid-ask spread $[(1-\lambda)S,S]$ and the density processes $Z$ of corresponding martingale measures, i.e., the positive martingales with initial value $1$ for which $\tilde{S}Z$ is a martingale.
\end{definition}

\begin{remark}
The pairs $(\tilde{S},Z)$ correspond to the consistent price systems of \cite{guasoni.al.08,schachermayer.04}. In the frictionless case $\lambda=0$ there is only one of these, namely the stock price process $S$ itself and the corresponding density process $Z$ determined by Girsanov's theorem. In this case, the dual problem reduces to the usual notion for frictionless markets as in, e.g., \cite[Theorems 3.9.11 and 3.9.12]{karatzas.shreve.98}. The only difference is the constant $1/\delta$, which is added to let the optimal primal and dual values coincide (compare Theorem \ref{thm:dual} below).
\end{remark}

\section{Heuristic derivation of the shadow price}
The key to solving both the primal and dual problems above is the following concept:

\begin{definition}
A \emph{shadow price process} $\tilde{S}$ is a continuous semimartingale with values in the bid-ask spread $[(1-\lambda)S,S]$, such that the maximal expected utilities $u_{\tilde{S},\tilde{S}}(\eta_B,\eta_S)$ in the frictionless market with price process $\tilde{S}$ and $u_{(1-\lambda)S,S}(\eta_B,\eta_S)$ in the original market with bid-ask process $((1-\lambda)S,S)$ coincide.
\end{definition}

From an economic point of view, it is obvious that the maximal expected utility for \emph{any} frictionless price process $\bar{S}$ with values in the bid-ask spread $[(1-\lambda)S,S]$ is at least as high as for the original problem with transaction costs, since one can always buy at least as cheaply and sell at least as expensively. To achieve equality of the maximal expected utilities, the optimal number $\varphi_t$ of stocks in the frictionless market with the shadow price process $\tilde{S}$ should increase (resp.\ decrease) only when $\tilde{S}_t=S_t$ (resp.\ $\tilde{S}_t=(1-\lambda)S_t$). Otherwise one could achieve higher utility than in the original market with transaction costs. This portfolio will then also be feasible in the original market with transaction costs and hence be optimal for this market as well.

It is well-known, dating back to~\cite{constantinidis.magill.76}, that there is no trading in the original market with transaction costs as long as the fraction 
$$\pi_t:=\frac{\varphi_t S_t}{\varphi_t^0+\varphi_t S_t}=\frac{1}{1+\varphi^0_t/(\varphi_t S_t)}$$
of total wealth in terms of the ask price $S$ invested in stocks lies inside some interval $[\underline{\theta},\overline{\theta}]$ around the Merton proportion $\theta=\mu/\sigma^2$. Let us reparametrize this conveniently for the computations below (compare \cite{gerhold.al.10}): Set 
\begin{equation}\label{eq:parameter}
c:=\frac{1-\underline{\theta}}{\underline{\theta}}, \quad \overline{s}:=\frac{(1-\underline{\theta})\overline{\theta}}{(1-\overline{\theta})\underline{\theta}}
\end{equation}
and define 
$$m_t:=\varphi^0_t /(c\varphi_t).$$
Then $\pi \in [\underline{\theta},\overline{\theta}]$ is equivalent to $m \in [S/\overline{s},S]$. We now make the \emph{ansatz} that the shadow price $\tilde{S}$ is given by
$$\tilde{S}_t=G(m_t,S_t),$$
for some function $G:\mathbb{R}^2 \to \mathbb{R}$. This is equivalent to assuming that $\tilde{S}_t$ is a function of the current ask price $S_t$ and the current fraction $\pi_t$ of wealth, in terms of the ask price~$S$, invested in stocks. However, using $m_t$ instead of $\pi_t$ turns out to be more convenient in the calculations below.

The function $G$ enjoys a scaling property inherited from the logarithmic utility function and its constant \emph{relative} risk aversion: 
$$G(am_t,aS_t)=a\tilde{S}_t=aG(m_t,S_t), \quad \forall a>0.$$
This is because the optimal number $\varphi^0_t$ of bonds changes to $a\varphi^0_t$ whereas the optimal number $\varphi_t$ of stocks and the dimensionless  constant $c$ remain the same when we rescale from $S_t$ to $aS_t$. Therefore, we can pass to the function
$$g(S_t/m_t):=G(1,S_t/m_t),$$
such that
$$\tilde{S}_t=G(m_t,S_t)=m_t g(S_t/m_t).$$

We now heuristically derive a free boundary problem that determines the function $g$ as well as the constants $c$ and $\bar{s}$. With these quantities at hand, we then explain how to construct the process $m$ and in turn the shadow price $\tilde{S}$.

 If $\tilde{S}$ is a shadow price, it is well-known that for logarithmic utility the optimal consumption rate at time $t \geq 0$ in the frictionless market with price process $\tilde{S}$ is given by
$$\kappa_t=\delta (\varphi^0_t+\varphi_t \tilde{S}_t),$$
where $\delta>0$ denotes the discount factor in the optimization problem (cf., e.g., \cite{merton.90}). Consequently, we  have
\begin{equation}\label{eq:varphi}
d\varphi_t=0, \quad d\varphi^0_t=-\delta(\varphi^0_t+\varphi_t \tilde{S}_t)dt=-\delta \varphi^0_t (1+g(S_t/m_t)/c)dt
\end{equation}
in the no-trade region, where we have inserted our ansatz $\tilde{S}_t=m_t g(S_t/m_t)$ and the definition $m_t=\varphi^0_t/(c\varphi_t)$.  Now consider the process
$$Y_t:=S_t/m_t=S_tc\varphi_t/\varphi^0_t,$$
which represents the fraction of wealth in stocks (in terms of the ask price $S$) over wealth in  bonds, normalized by $c$. In view of the dynamics \eqref{eq:bs} of $S$ and \eqref{eq:varphi} of $\varphi^0$ and $\varphi$, It\^o's formula yields
\begin{equation}\label{eq:y}
dY_t/Y_t=dS_t/S_t-d\varphi^0_t/\varphi^0_t=\big(\mu+\delta(1+g(Y_t)/c)\big)dt+\sigma dW_t.
\end{equation}
Again by It\^o's formula, the dynamics of the shadow price $\tilde{S}=(S/Y)g(Y)$ are
\begin{align*}
d\tilde{S}_t/\tilde{S}_t&=\left(\frac{\mu g'(Y_t)Y_t+\frac{\sigma^2}{2}g''(Y_t)Y_t^2-\delta (g(Y_t)-g'(Y_t)Y_t)(1+g(Y_t)/c)}{g(Y_t)}\right)dt\\
&\qquad +\left(\frac{\sigma g'(Y_t)Y_t}{g(Y_t)}\right)dW_t\\
&=:\tilde{\mu}(Y_t) dt+\tilde{\sigma}(Y_t) dW_t,
\end{align*}
during each excursion into the no-trade region. Due to Merton's rule for logarithmic utility \cite{merton.71}, the Merton ratio for $\tilde{S}$ must equal the proportion of total wealth invested in stocks \emph{in terms of the shadow price $\tilde{S}$}, i.e., we must have
$$\frac{\tilde{\mu}(Y_t)}{\tilde{\sigma}(Y_t)^2}=\tilde{\pi}_t:=\frac{\varphi_t\tilde{S}_t}{\varphi^0_t+\varphi_t \tilde{S}_t}=\frac{g(Y_t)}{c+ g(Y_t)}.$$
After rearranging, this yields the following ODE for $g$:
\begin{equation}\label{eq:ode}
g''(y)=\frac{2g'(y)^2}{c+g(y)}-\frac{2\theta g'(y)}{ y}+\frac{2\delta}{\sigma^2}\left(1+\frac{g(y)}{c}\right)\left(\frac{g(y)}{y^2}-\frac{g'(y)}{y}\right).
\end{equation}
Note that in the limit $\delta \to 0$, this is precisely the ODE from \cite{gerhold.al.10}. By \eqref{eq:parameter}, the process $Y=S/m$ takes values in $[1,\overline{s}]$. Hence it is natural to complement the ODE \eqref{eq:ode} with some boundary conditions at $1$ and $\overline{s}$ as in \cite{gerhold.al.10}. Since we have a second order ODE, and the parameters $c$ and $\overline{s}$ are also yet to be determined, we need two boundary conditions at each endpoint. The shadow price $\tilde{S}=mg(S/m)$ should equal the bid price $(1-\lambda)S$ at the selling boundary $\overline{\theta}$, where $m=S/\overline{s}$, resp.\ the ask price $S$ at the buying boundary $\underline{\theta}$, where $m=S$. Hence we impose
\begin{equation}\label{eq:boundary1}
g(1)=1 \quad \mbox{and} \quad g(\overline{s})=(1-\lambda)\overline{s}.
\end{equation}
Moreover, we add the smooth pasting conditions
\begin{equation}\label{eq:boundary2}
g'(1)=1 \quad \mbox{and} \quad g'(\overline{s})=1-\lambda.
\end{equation}
They can be derived by looking at the ratio $\tilde{S}/S$, which should stay in $[1-\lambda,1]$, and setting its diffusion coefficient equal to zero at $1-\lambda$ and~$1$ (compare \cite{gerhold.al.10}). These four boundary conditions should now determine the function $g$ as the solution to our second-order ODE as well as the two free variables $c$ and $\bar{s}$, which in turn identify the boundaries $\underline{\theta}$ and $\overline{\theta}$ of the no-trade region.

With $g,c,\bar{s}$ at hand, it now remains to extend this construction from one excursion into the no-trade region to the entire positive real time line. To this end notice that, even when trading in the stock does happen, the process $Y=c\varphi_t S_t /\varphi^0_t$ must always remain within $[1,\overline{s}]$, in order to keep the corresponding fraction of wealth in stocks in $[\underline{\theta},\overline{\theta}]$. However, the diffusion coefficient in the SDE \eqref{eq:y} is bounded away from zero when~$Y_t$ approaches either $1$ or $\overline{s}$. Therefore we need to complement the SDE \eqref{eq:y} for $Y$ with \emph{instantaneous reflection} at both endpoints of $[1,\overline{s}]$. Then $Y$ will by definition remain in $[1,\overline{s}]$ and follow the claimed dynamics during each excursion into the no-trade region. With the process $Y$ at hand, we can then define the candidate shadow price $\tilde{S}=(S/Y)g(Y)$.

\section{Construction of the shadow price}

We begin with an existence result for the free boundary value problem from the preceding section. As in \cite[Proposition 4.2]{kallsen.muhlekarbe.10}, it would be possible to obtain a similar result for any relative width $\lambda$ of the bid-ask spread. However, we restrict ourselves to the case of sufficiently small $\lambda$ here. This is all we need to derive asymptotic expansions in terms of $\lambda$ below and seems much better suited to the analysis of more complicated models, since it requires only a much coarser inspection of the involved equations. 

\begin{lemma}\label{le:existence}
If $\lambda>0$ is sufficiently small, then
there are constants $c>(1-\theta)/\theta$ and $\overline{s}>1$ as well as a function $g: [1,\overline{s}] \to [1,(1-\lambda)\overline{s}]$ satisfying the ODE
\begin{equation}\label{eq:ode2}
g''(y)=\frac{2g'(y)^2}{c+g(y)}-\frac{2\theta g'(y)}{ y}+\frac{2\delta}{\sigma^2}\left(1+\frac{g(y)}{c}\right)\left(\frac{g(y)}{y^2}-\frac{g'(y)}{y}\right)
\end{equation}
with boundary conditions
\begin{equation}\label{eq:boundary}
 g(1)=1, \quad g(\overline{s})=(1-\lambda)\overline{s}, \quad g'(1)=1, \quad g'(\bar{s})=(1-\lambda).
 \end{equation}
\end{lemma}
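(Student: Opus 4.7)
The plan is to treat Lemma~\ref{le:existence} as a singular perturbation of the frictionless limit $\lambda = 0$, in which the free-boundary problem degenerates ($\bar s \to 1$, $c \to c_0 := (1-\theta)/\theta$, and the no-trade interval collapses to the single point $y=1$). The strategy is therefore to rescale to a fixed domain, identify a nondegenerate leading-order problem, and apply the implicit function theorem.

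First, I would set up a shooting formulation: for each $c > 0$ in a neighbourhood of $c_0$, the initial-value problem consisting of \eqref{eq:ode2} together with the left boundary conditions $g(1) = g'(1) = 1$ admits a unique $C^2$-solution $g_c$ on a maximal right interval by Picard--Lindel\"of, since the right-hand side of \eqref{eq:ode2} is smooth in $(y, g, g')$ wherever $y > 0$ and $c + g > 0$. The lemma then amounts to finding $(c, \bar s)$ such that the two remaining endpoint conditions $g_c(\bar s) = (1-\lambda)\bar s$ and $g_c'(\bar s) = 1-\lambda$ are satisfied simultaneously.

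Guided by the scaling of \cite{janecek.shreve.04}, I would introduce $\epsilon := \lambda^{1/3}$, $\bar s = 1 + \beta\epsilon$, $c = c_0 + \gamma\epsilon^2$, and $y = 1 + \beta\epsilon\zeta$ with $\zeta \in [0,1]$, writing $g(y) - y = \epsilon^k r(\zeta; \epsilon)$ with the exponent $k$ chosen by balancing terms in \eqref{eq:ode2} so that the four boundary conditions \eqref{eq:boundary} become nondegenerate relations on $(r, \beta, \gamma)$ at $\epsilon = 0$. The rescaled interior equation, Taylor-expanded to leading order, reduces to an essentially constant-coefficient polynomial ODE for $r_0(\zeta)$ (at the crudest level, $r_0'''$ is constant), so $r_0$ is an explicit low-degree polynomial; the four boundary conditions then form a finite algebraic system whose unique solution determines $(\beta_0, \gamma_0)$ with $\beta_0 > 0$ and $\gamma_0 > 0$, supplying $\bar s > 1$ and $c > c_0$ at leading order.

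Finally, I would encode the rescaled two-point BVP as the zero set of a smooth map $\Phi : C^2([0,1]) \times \mathbb{R}_{>0} \times \mathbb{R} \times \mathbb{R}_{\ge 0} \to C([0,1]) \times \mathbb{R}^4$ (interior residual together with the four boundary conditions), observe $\Phi(r_0, \beta_0, \gamma_0, 0) = 0$, and check that the Fr\'echet derivative of $\Phi$ in the $(r, \beta, \gamma)$ variables at that point is a linear bijection. The latter reduces to verifying that a constant-coefficient linear BVP on $[0,1]$, with two free scalar parameters matching four boundary data, has only the trivial kernel, which can be read off from the explicit form of $r_0$. The implicit function theorem then yields a unique smooth branch $(r(\epsilon), \beta(\epsilon), \gamma(\epsilon))$ for $\epsilon \in [0, \epsilon_0)$, and unscaling produces the desired triple $(c, \bar s, g)$ with the required positivity. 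I expect the main obstacle to be step~two: a naive ansatz either leaves the limit problem still degenerate or introduces unbalanced orders in \eqref{eq:ode2}, so fixing the correct exponent $k$ and the correct multi-scale ansatz is essentially tantamount to carrying out the leading-order Jane\v{c}ek--Shreve computation by hand before the IFT can be applied.
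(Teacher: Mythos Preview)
Your shooting set-up---solve the IVP from $y=1$ with $g(1)=g'(1)=1$, then match the two conditions at $\bar s$---is also how the paper begins, but the executions diverge sharply thereafter. The paper never rescales to a fixed interval and never invokes a Banach-space IFT. Instead it exploits that the right-hand side of \eqref{eq:ode2} is analytic near $(s,g,g',c)=(1,1,1,\bar c)$ with $\bar c=(1-\theta)/\theta$, so the IVP solution $g(s,c)$ is jointly analytic in $(s,c)$. The two free-boundary conditions then become finite-dimensional equations: after dividing $g(s,c)-s\,\partial_s g(s,c)=0$ by $s-1$ to remove the trivial root, the ordinary analytic implicit function theorem gives $\bar s=H(c)$; substituting into $\bar s-g(\bar s,c)=\lambda\bar s$ yields $\lambda\cdot(\text{unit})=a_0^{-1}(c-\bar c)^3+O((c-\bar c)^4)$, and a cube root followed by one more finite-dimensional IFT solves for $c-\bar c$ as an analytic function of $\lambda^{1/3}$. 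This route delivers not just existence but the full fractional Taylor expansions of $c$ and $\bar s$ with coefficients computable by Lagrange inversion, which is what the paper is really after.

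There is also a concrete error in your ansatz that would make the Banach IFT step fail as written. The paper establishes $c-\bar c\sim a_0^{1/3}\lambda^{1/3}$ (equation \eqref{eq:c expans}), so the correct scaling is $c=c_0+\gamma\epsilon$, not $c=c_0+\gamma\epsilon^2$. With your exponent one has $g''(1)=2\bigl((c+1)^{-1}-\theta\bigr)=-2\theta^2\gamma\epsilon^2+O(\epsilon^4)$, so the $\gamma$-contribution to $g(y)-y$ for $y-1=O(\epsilon)$ enters only at order $\epsilon^4$, below the $\epsilon^3$ balance forced by $\bar s-g(\bar s)=\lambda\bar s$. The limiting problem then determines $\beta_0$ but carries no information on $\gamma$; the partial derivative of your map $\Phi$ in the $\gamma$-direction vanishes at $\epsilon=0$ and the Fr\'echet derivative is not surjective. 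This is exactly the degeneracy you warn about in your closing paragraph---but it is induced by the specific scaling you committed to, not an intrinsic obstacle.
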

\begin{proof}
  We first ignore the free boundary $\bar{s}$ and consider only the ODE~\eqref{eq:ode2}
  with initial conditions $g(1)=g'(1)=1$, with~$c$ as free parameter.
  This problem can be written as
  \begin{align*}
    \partial_s g(s,c) &= F_1(s,g,h,c), \\
    \partial_s h(s,c) &= F_2(s,g,h,c), \\
    g(1,c) &= h(1,c) = 1,
  \end{align*}
  where $F_1(s,u,v,c) := v$, and
  \[
    F_2(s,u,v,c) :=  \frac{2v^2}{c+u} - \frac{2\theta v}{s}
      +\frac{2\delta}{\sigma^2}\left(1+\frac{u}{c}\right) \left(\frac{u}{s^2} - \frac{v}{s} \right).
  \]
  Since $\bar{c}:=(1-\theta)/\theta\notin\{-1,0\}$, the function~$F_2$ (and of course also~$F_1$) is
  analytic in a complex neighbourhood of~$(s,u,v,c)=(1,1,1,\bar{c})$.
  By a standard existence and
  uniqueness result for analytic ODEs~\cite[Theorem~1.1]{IlYa08},
  we obtain a neighbourhood of $(s,c)=(1,\bar{c})$ with a unique analytic solution $g(s,c)$.
  The coefficients~$a_{ij}$ of its Taylor expansion
  \[
    g(s,c) = 1 + (s-1) + \sum_{i\geq2} \sum_{j\geq0} a_{ij}(s-1)^i (c-\bar{c})^j
  \]
  can be calculated recursively. Note that 
  $a_{0j}=a_{1j}=\delta_{0j}$ for $j\geq0$, where $\delta$ is the Kronecker delta,
  which follows from the initial conditions $g(1,c)=g'(1,c)=1$.
  
  From this analytic function of two variables we will now construct the solution of the
  free boundary value problem \eqref{eq:ode2}--\eqref{eq:boundary}, by supplying appropriate
  quantities~$c$ and~$\bar{s}$
  (depending on~$\lambda$). These should satisfy
  \begin{equation}\label{eq:fb}
    \lambda \bar{s} = \bar{s}-g(\bar{s},c) \qquad \text{and}
      \qquad g(\bar{s},c) - \bar{s} \partial_s|_{s=\bar{s}} g(s,c) = 0.
  \end{equation}
  We divide the latter equation by $s-1$, reflecting the fact that we are not
  interested in the solution $s=1$ (which exists because of the initial conditions $g(1)=g'(1)=1$).
  A Taylor expansion yields
  \begin{equation}\label{eq:divided}
    \frac{g(s,c) - s \partial_s g(s,c)}{s-1} =
      \sum_{i\geq0} \sum_{j\geq0} b_{ij}(s-1)^i (c-\bar{c})^j
  \end{equation}
  for some real, explicitly computable coefficients~$b_{ij}$. It turns out that
  $b_{0,0}=0$, while
  $b_{1,0}=\theta(\theta-1)$ is non-zero, so that we can apply the implicit function theorem
  for multivariate analytic functions~\cite[Theorem~I.B.4]{GuRo65} to obtain an analytic function
  \begin{equation}\label{eq:sbar of c}
    H(c) = 1+\sum_{i\geq1} d_i (c-\bar{c})^i
      = 1+\tfrac{2\theta}{1-\theta}(c-\bar{c}) + O((c-\bar{c})^2)
  \end{equation}
  such that~\eqref{eq:divided} vanishes if~$H(c)$ is substituted for~$s$.
  Here and in what follows, the coefficients of the power series symbolized
  by the $O(\cdot)$-s can be algorithmically computed.
  Now we insert
  the function~$H(c)$ into the right hand side of the first equation in~\eqref{eq:fb}:
  \[
    H(c) - g(H(c),c) = a_0^{-1} (c-\bar{c})^3 + O((c-\bar{c})^4),
  \]
  where $a_0=\tfrac34 (1-\theta)^2/\theta^4$. If we insert~$H(c)$ also on the left hand side,
  the first equation in~\eqref{eq:fb} becomes
  \[
    \lambda(1+d_1(c-\bar{c}) + O((c-\bar{c})^2)) = a_0^{-1}(c-\bar{c})^3(1+O(c-\bar{c})).
  \]
  Dividing by $ a_0^{-1}(1+O(c-\bar{c}))$, we obtain
  \[
    \lambda  (a_0 + O(c-\bar{c})) = (c-\bar{c})^3.
  \]
  Let us denote the coefficients of the series on the left hand side by~$a_k$:
  \[
    \lambda (a_0 + a_1(c-\bar{c}) + \dots ) = (c-\bar{c})^3.
  \]
  Now take the third root:
  \[
    \lambda^{1/3}a_0^{1/3} (1 + \tfrac{a_1}{3a_0}(c-\bar{c})+  \dots) = c-\bar{c}.
  \]
  (Note that the coefficients of powers of power series can be algorithmically
  calculated~\cite{Go74}.)
  Since $a_0=\tfrac34 (1-\theta)^2/\theta^4$ is non-zero, we can apply the implicit
  function theorem, so that the latter equation has a unique solution
  for small~$\lambda$. It is an analytic function of~$\lambda^{1/3}$:
  \[
    c - \bar{c} = a_0^{1/3} \lambda^{1/3} + O(\lambda^{2/3}).
  \]
  The coefficients in this power series expansion can be calculated with
  the Lagrange inversion formula:
  \begin{equation}\label{eq:lag inv}
    [\lambda^{k/3}](c-\bar{c}) = \tfrac{1}{k}[z^{k-1}]a_0^{k/3}
      (1 + \tfrac{a_1}{3a_0} z + \dots)^k ,\qquad k\geq1,
  \end{equation}
  where the operator~$[z^k]$ extracts the $k$-th coefficient of a power series.
  (This is analogous to the proof of Proposition~6.1 in~\cite{gerhold.al.10}.)
  With this value of~$c$, the function $g(s,c)$, together with $\bar{s}=H(c)$,
  is the desired solution of the free boundary value problem. To see that
  it maps $[1,\overline{s}]$ to $[1,(1-\lambda)\overline{s}]$, note that otherwise
  the equation $g(s)-sg'(s)=0$ would have a solution in the open interval $]1,\overline{s}[$.
  This is impossible, since~$H(c)$ is the \emph{unique} zero of~\eqref{eq:divided}.
  
  {}From~\eqref{eq:lag inv}, we can calculate as many terms in the expansion
  of~$c$ as we want. The first of them are
  \begin{equation}\label{eq:c expans}
    c = \bar{c} + \frac{1-\theta}{2\theta}\left( \frac{6}{\theta(1-\theta)} \right)^{1/3} \lambda^{1/3} 
    +\frac{2\delta/\sigma^2+3(1-\theta)^2}{12\theta} \left( \frac{6}{\theta(1-\theta)} \right)^{2/3}   
    \lambda^{2/3} + O(\lambda).
  \end{equation}
  Plugging this expansion into~\eqref{eq:sbar of c} gives an expansion of~$\bar{s}$,
  again to arbitrary order. The first terms are
  \begin{equation}\label{eq:s expans} 
    \bar{s} = 1 + \left( \frac{6}{\theta(1-\theta)} \right)^{1/3} \lambda^{1/3} +\frac12 \left(   
    \frac{6}{\theta(1-\theta)} \right)^{2/3} \lambda^{2/3} + O(\lambda).   
  \end{equation}
  Finally, note that~\eqref{eq:c expans} shows that $c>\bar{c}=(1-\theta)/\theta$ for sufficiently   
  small~$\lambda$.
\end{proof}

Henceforth, we consider sufficiently small transaction costs $\lambda>0$ for Lemma \ref{le:existence} to be valid and write $g$ resp.\ $c, \bar s$ for the corresponding function resp.\ constants defined therein.  

The existence of the process $Y$ as the solution to the SDE \eqref{eq:y} with reflection at $1$ and $\overline{s}$ is a consequence of a classical result of Skorokhod \cite{skorokhod.61}.

\begin{lemma}\label{lem:skorokhod}
Let $y \in [1,\bar{s}]$. Then there exists a strong solution to the \emph{Skorokhod SDE}
\begin{equation}\label{eq:skorokhod}
dY_t=Y_t\big(\mu+\delta(1+g(Y_t)/c)\big)dt+Y_t\sigma dW_t, \quad Y_0=y,
\end{equation}
with instantaneous reflection at $1$ and $\overline{s}$, that is, a continuous, adapted, $[1,\overline{s}]$-valued process $Y$ and nondecreasing adapted processes $L$ and $U$ increasing only on the sets $\{Y_t=1\}$ and $\{Y_t=\overline{s}\}$, respectively, such that
$$Y_t=Y_0+\int_0^t Y_u (\mu+\delta(1+g(Y_u)/c)du+\int_0^t Y_u\sigma dW_u+\int_0^t Y_u dL_u-\int_0^t Y_u dU_u$$
holds for all $t \in \mathbb{R}_+$.
\end{lemma}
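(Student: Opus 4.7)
The plan is to reduce the equation to a standard Skorokhod reflection problem on the bounded interval $[1,\bar s]$ with Lipschitz coefficients, and then invoke the classical existence and uniqueness result of \cite{skorokhod.61}.

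First, I would exploit the support properties of the reflecting processes to absorb the factors $Y_u$ appearing in front of $dL_u$ and $dU_u$ into equivalent local time processes. Since $L$ increases only on $\{Y_t=1\}$ and $U$ only on $\{Y_t=\bar s\}$, the integrals satisfy $\int_0^t Y_u\, dL_u = L_t$ and $\int_0^t Y_u\, dU_u = \bar s\, U_t$ identically. Setting $\tilde L := L$ and $\tilde U := \bar s\, U$, the equation becomes
\begin{equation*}
Y_t = y + \int_0^t b(Y_u)\, du + \int_0^t \sigma Y_u\, dW_u + \tilde L_t - \tilde U_t,
\end{equation*}
with $b(y) := y(\mu + \delta(1+g(y)/c))$ and with $\tilde L, \tilde U$ continuous, nondecreasing, adapted, and growing only on $\{Y=1\}$ and $\{Y=\bar s\}$, respectively. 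This is precisely the classical one-dimensional Skorokhod problem for a diffusion reflected at both endpoints of a compact interval.

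Next, I would verify the regularity hypotheses. By Lemma~\ref{le:existence}, $g$ is analytic, hence $C^\infty$, on $[1,\bar s]$, so both $b$ and $y\mapsto \sigma y$ are Lipschitz continuous (and bounded) on the compact interval $[1,\bar s]$. The two-sided Skorokhod reflection map on $[1,\bar s]$ is Lipschitz in sup-norm on each compact time interval, so a standard Picard iteration argument in the spirit of \cite{skorokhod.61} yields a unique strong solution $(Y,\tilde L,\tilde U)$. Finally, setting $L := \tilde L$ and $U := \tilde U/\bar s$ recovers the statement in the form claimed.

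The only substantive point is the bookkeeping in the first step: the slightly non-standard appearance of the factors $Y_u$ in the reflection terms has to be reconciled with the usual formulation of the Skorokhod problem. Once this rewrite is carried out, existence and pathwise uniqueness follow immediately from the classical one-dimensional theory of reflected SDEs with Lipschitz coefficients on a bounded interval.
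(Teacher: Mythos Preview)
Your argument is correct and follows essentially the same route as the paper: both observe that $g$ is smooth (hence the drift and diffusion coefficients are Lipschitz on the compact interval $[1,\bar s]$, using also $c>0$) and then invoke the classical result of \cite{skorokhod.61}. Your additional bookkeeping step, absorbing the factors $Y_u$ in the reflection terms via $\int_0^t Y_u\,dL_u=L_t$ and $\int_0^t Y_u\,dU_u=\bar s\,U_t$, is a helpful clarification that the paper leaves implicit, but the substance of the proof is the same.
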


\begin{remark}
In the limit $\delta \to 0$, the process~$Y$ is just a geometric Brownian motion instantaneously reflected at $1$ and $\overline{s}$. Hence it coincides with the process $S/m$ from~\cite{gerhold.al.10}. Let us also point out that $Y$ is related to the process $\beta$ from \cite[Lemma 4.3]{kallsen.muhlekarbe.10} by a nonlinear transformation. However, the present parametrization evidently leads to simpler dynamics. Moreover, the additional process $C$ from \cite[Lemma 4.4]{kallsen.muhlekarbe.10} is no longer necessary.
\end{remark}

\begin{proof}[Proof of Lemma \ref{lem:skorokhod}]
Since $g$ is bounded on $[1,\overline{s}]$ and $c>(1-\theta)/\theta>0$ by Lemma~\ref{le:existence}, the coefficients of \eqref{eq:skorokhod} are globally Lipschitz on $[1,\overline{s}]$. Hence the claim follows from~\cite{skorokhod.61}.
\end{proof}

It was shown in \cite[Proposition 2.1]{gerhold.al.10} that, for $\delta=0$, the process $\tilde{S}=m g(S/m)=(S/Y) g(Y)$ is an It\^o process, i.e., does not involve local time. We now deduce from Lemma \ref{lem:skorokhod} that this still holds for arbitrary $\delta > 0$. The present proof only uses It\^o's formula and the smooth pasting conditions~\eqref{eq:boundary} for~$g$. In particular, it allows to avoid the somewhat ad hoc localization procedures used in the proof of \cite[Proposition 2.1]{gerhold.al.10}.

\begin{proposition}\label{prop:shadow}
For $y \in [1,\bar{s}]$ and $Y$ as in Lemma \ref{lem:skorokhod}, $\tilde{S}:=(S/Y) g(Y)$ is a positive It\^o process of the exponential form
$$\tilde{S}=\frac{S_0 g(y)}{y} \scr{E}(\tilde{X}),$$
with
$$d\tilde{X}_t:=d\tilde{S}_t/\tilde{S}_t= \left(\frac{\sigma^2 g'(Y_t)^2 Y_t^2}{(c+g(Y_t))g(Y_t)}\right)dt+\left(\frac{g'(Y_t)Y_t \sigma}{g(Y_t)}\right)dW_t, \quad \tilde{X}_0=0,$$
and $\tilde{S}$ takes values in the bid-ask spread $[(1-\lambda)S,S]$.
\end{proposition}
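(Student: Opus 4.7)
The plan is to apply It\^o's formula to $\tilde{S}_t = S_t\phi(Y_t)$ with $\phi(y) := g(y)/y$, and then use the boundary conditions~\eqref{eq:boundary} to kill the local-time terms and the ODE~\eqref{eq:ode2} to simplify the drift. By Lemma~\ref{le:existence}, $g$ is analytic on $[1,\bar{s}]$, so $\phi\in C^2([1,\bar{s}])$ and It\^o's formula applies directly using the dynamics~\eqref{eq:bs} of $S$ and the reflected dynamics of $Y$ from Lemma~\ref{lem:skorokhod}.

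The first main point is that the local-time contributions vanish. The It\^o expansion of $d\tilde{S}_t$ contains terms $S_t\phi'(Y_t)Y_t\,dL_t$ and $-S_t\phi'(Y_t)Y_t\,dU_t$. Since $L$ grows only on $\{Y=1\}$ and $U$ only on $\{Y=\bar{s}\}$, these are determined by $\phi'(1)$ and $\phi'(\bar{s})$. A direct computation using $\phi'(y) = g'(y)/y - g(y)/y^2$ combined with~\eqref{eq:boundary} gives $\phi'(1) = g'(1)-g(1) = 0$ and $\phi'(\bar{s}) = (1-\lambda)/\bar{s} - (1-\lambda)\bar{s}/\bar{s}^2 = 0$. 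Hence $\tilde{S}$ is a continuous It\^o process with no local-time component.

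Next I would read off the coefficients. The Brownian part of $d\tilde{S}_t$ collects to $\sigma S_t(\phi(Y_t)+Y_t\phi'(Y_t))\,dW_t = \sigma S_t g'(Y_t)\,dW_t$, and dividing by $\tilde{S}_t = S_t\phi(Y_t)$ yields the stated diffusion coefficient $\sigma g'(Y_t)Y_t/g(Y_t)$. For the drift, the direct It\^o computation, after dividing by $\tilde{S}_t$ and using $\phi+y\phi' = g'$, leaves (suppressing time indices)
\begin{equation*}
\mu\,\frac{g'(Y)Y}{g(Y)} + \delta\Bigl(1+\frac{g(Y)}{c}\Bigr)\frac{g'(Y)Y-g(Y)}{g(Y)} + \frac{\sigma^2 Y^2 g''(Y)}{2\,g(Y)}.
\end{equation*}
Substituting $g''(Y)$ from~\eqref{eq:ode2} and using $\sigma^2\theta=\mu$, all terms cancel pairwise except $\sigma^2 g'(Y)^2 Y^2/((c+g(Y))g(Y))$; this algebraic reduction is the main (but routine) obstacle. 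Positivity then follows from $g\geq 1>0$ on $[1,\bar{s}]$ by Lemma~\ref{le:existence} and $S>0$, so $\tilde{S} = (S_0 g(y)/y)\,\scr{E}(\tilde{X})$ with $\tilde{X}_0 = 0$.

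Finally, to show $\tilde{S}\in[(1-\lambda)S,S]$ it suffices to prove $\phi(y)\in[1-\lambda,1]$ on $[1,\bar{s}]$. The boundary values are $\phi(1)=1$ and $\phi(\bar{s})=1-\lambda$, and $\phi'(1)=\phi'(\bar{s})=0$ as above. Moreover, $\phi'(y)=0$ is equivalent to $g(y)-yg'(y)=0$, which by the same uniqueness argument invoked at the end of the proof of Lemma~\ref{le:existence} has no zero in the open interval $(1,\bar{s})$. Hence $\phi$ is strictly monotone on $[1,\bar{s}]$ and so stays inside $[1-\lambda,1]$.
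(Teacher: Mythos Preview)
Your proof is correct and follows essentially the same route as the paper: apply It\^o's formula, use the smooth-pasting conditions~\eqref{eq:boundary} to eliminate the local-time terms, and insert the ODE~\eqref{eq:ode2} to reduce the drift. The only cosmetic differences are that the paper computes $dg(Y_t)$ and $d(S_t/Y_t)$ separately and combines them by integration by parts (whereas you work directly with $\phi=g/y$), and that for the final bid-ask inclusion the paper argues via $g''(1)<0$ and a comparison argument that $g'(y)y-g(y)\le 0$, while you invoke the uniqueness of the zero of~\eqref{eq:divided} from the proof of Lemma~\ref{le:existence}; both are valid.
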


\begin{proof}
Using It\^o's formula, we obtain
$$dg(Y_t)=g'(Y_t)Y_t(\mu+\delta(1+g(Y_t)/c)dt+\sigma dW_t+dL_t - dU_t)+\frac{\sigma^2}{2}g''(Y_t)Y_t^2 dt$$
and
$$\frac{d(S_t/Y_t)}{S_t/Y_t}=-\delta\left(1+\frac{g(Y_t)}{c}\right)dt-dL_t+dU_t.$$
Hence integration by parts yields
\begin{align*}
d\tilde{S}_t/\tilde{S}_t&=\left(\frac{\mu g'(Y_t)Y_t+\frac{\sigma^2}{2}g''(Y_t)Y_t^2-\delta(g(Y_t)-g'(Y_t)Y_t)(1+g(Y_t)/c)}{g(Y_t)}\right)dt\\
&\qquad+\left(\frac{\sigma g'(Y_t)Y_t}{g(Y_t)}\right)dW_t+\frac{g'(Y_t)Y_t-g(Y_t)}{g(Y_t)}(dL_t-dU_t).
\end{align*}
Since $L$ and $U$ only increase on $\{Y_t=1\}$ resp.\ $\{Y_t=\overline{s}\}$, where the integrand $g'(Y_t)Y_t-g(Y_t)$ is zero by the smooth pasting conditions for $g$, the last term vanishes. The It\^o process decomposition of $\tilde{X}$ now follows by inserting the ODE \eqref{eq:ode2} for $g''$. 

For the second part of the assertion, notice that $g''(1)<0$, because $c>(1-\theta)/\theta$. Hence it follows from a comparison argument that $g'(y)y-g(y) \leq 0$ on $[1,\overline{s}]$. As this implies that the derivative $(g'(y)y-g(y))/y^2$ of $g(y)/y$ is non-positive, $g(1)/1=1$ and $g(\overline{s})/\overline{s}=1-\lambda$ yield that $\tilde{S}=Sg(Y)/Y$ is indeed $[(1-\lambda)S,S]$-valued.
\end{proof}

Using standard results for frictionless markets, we can now determine the optimal portfolio/consumption pair for $\tilde{S}=(S/Y)g(Y)$.

\begin{lemma}\label{thm:shadow}
Set
\begin{equation}\label{eq:jump}
y=\begin{cases} 1, &\mbox{if } \eta_B> S_0 c \eta_S,\\  \bar{s}, &\mbox{if } \eta_B< S_0 c \eta_S/\bar{s},\\ S_0 c \eta_S/\eta_B &\mbox{otherwise.}  \end{cases}
\end{equation}
For $Y$ and $\tilde{S}$ as in Lemma \ref{lem:skorokhod} and Proposition \ref{prop:shadow}, respectively, define
$$\tilde{\pi}_t=\frac{1}{1+c/g(Y_t)},$$
as well as
\begin{align*}
\tilde{V}_t&=(\eta_B+\eta_S \tilde{S}_0) \scr{E}\left( \int_0^\cdot \frac{\tilde{\pi}_t}{\tilde{S}_t} d\tilde{S}_t-\int_0^\cdot \delta dt\right), \\
\kappa_t&=\delta \tilde{V}_t,\\
\varphi_t&=\tilde{\pi}_t \tilde{V}_t/\tilde{S}_t, \qquad \varphi^0_t=(1-\tilde{\pi}_t)\tilde{V}_t.
\end{align*}
Then
\begin{align}\label{eq:phi}
\varphi_t&=\varphi_{0}+\int_0^t \frac{\varphi_u c g'(Y_u)Y_u}{g(Y_u)(c+g(Y_u))}dL_u-\int_0^t \frac{\varphi_u c g'(Y_u)Y_u}{g(Y_u)(c+g(Y_u))}dU_u,
\end{align}
and $((\varphi^0,\varphi),\kappa)$ is an optimal portfolio/consumption pair for the frictionless market with price process $\tilde{S}$. The corresponding wealth process and the optimal fraction of wealth invested in stocks in terms of $\tilde{S}$ are given by $\tilde{V}$ and $\tilde{\pi}$, respectively.
\end{lemma}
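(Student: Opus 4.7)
The assertion decomposes into two essentially independent parts: (i) optimality of $((\varphi^0,\varphi),\kappa)$ in the frictionless market with price process $\tilde S$, with wealth $\tilde V$ and risky-asset fraction $\tilde\pi$; and (ii) the explicit decomposition \eqref{eq:phi}, which exhibits $\varphi$ as of pure local-time type -- the hallmark of a shadow-price construction. The plan is to address these in turn.

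For (i), I first observe that the definition of $\tilde V$ as a stochastic exponential is equivalent to the self-financing wealth equation $d\tilde V_t = \varphi_t\, d\tilde S_t - \kappa_t\, dt$ under the substitutions $\varphi_t\tilde S_t = \tilde\pi_t\tilde V_t$, $\varphi_t^0 = (1-\tilde\pi_t)\tilde V_t$, and $\kappa_t = \delta\tilde V_t$. Next, combining the drift and diffusion coefficients of $\tilde S$ supplied by Proposition~\ref{prop:shadow}, I will verify
\[
\frac{\tilde\mu(Y_t)}{\tilde\sigma(Y_t)^2} \;=\; \frac{g(Y_t)}{c+g(Y_t)} \;=\; \tilde\pi_t,
\]
so $\tilde\pi_t$ is the local Merton ratio of $\tilde S$. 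Since $Y$ lives in the compact interval $[1,\bar s]$ and $g\in C^2$, the coefficients of $\tilde S$ are bounded, and the classical Merton result for logarithmic utility with discount rate $\delta$ (see, e.g., \cite{karatzas.shreve.98, merton.71}) then identifies $((\varphi^0,\varphi),\kappa)$ as the unique optimizer, with $\tilde V$ and $\tilde\pi$ as the corresponding wealth process and risky-asset fraction.

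For (ii), combining $\varphi_t\tilde S_t = \tilde\pi_t\tilde V_t$ with $\tilde S_t = S_t g(Y_t)/Y_t$ yields $\varphi_t = Y_t\tilde V_t/((c+g(Y_t))S_t)$. I will then compute
\[
d\log\varphi_t \;=\; d\log\tilde V_t + d\log Y_t - d\log(c+g(Y_t)) - d\log S_t
\]
by applying It\^o's formula to the reflected diffusion $Y$ from Lemma~\ref{lem:skorokhod}, to $g\in C^2$, to the Black--Scholes $S$, and using the wealth dynamics from Step (i). The Brownian coefficients collapse immediately via $\tilde\pi\tilde\sigma = g'(Y)Y\sigma/(c+g(Y))$. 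The local-time terms are supported on $\{Y\in\{1,\bar s\}\}$; there the boundary conditions \eqref{eq:boundary} give $g(Y)=Y g'(Y)$, which turns the raw coefficient $(c+g(Y_t)-g'(Y_t)Y_t)/(c+g(Y_t))$ of $dL_t - dU_t$ into the form $cg'(Y_t)Y_t/(g(Y_t)(c+g(Y_t)))$ appearing in \eqref{eq:phi}. Since $\log\varphi_t$ has zero quadratic variation, $d\varphi_t = \varphi_t\, d\log\varphi_t$, giving the stated formula once the drift cancels.

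The crux -- and the place where I expect all the work to lie -- is this drift cancellation. After collecting all $dt$-terms and multiplying through by $2(c+g(Y))/\sigma^2$, the resulting identity must hold pointwise for every $Y\in[1,\bar s]$. I expect it to reduce \emph{exactly} to the ODE~\eqref{eq:ode2}: indeed, that ODE was derived in Section~3 precisely in order to enforce that the optimal share count in the frictionless market with price $\tilde S$ changes only at the boundaries of the no-trade region. Once this calculation is pinned down, the Brownian cancellation, the boundary simplification via smooth pasting, and the appeal to Merton's theorem are all routine.
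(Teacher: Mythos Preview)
Your plan is correct and will go through. Part~(i) coincides with the paper's argument: both identify $\tilde\pi_t=\tilde\mu(Y_t)/\tilde\sigma(Y_t)^2$ from the coefficients in Proposition~\ref{prop:shadow} and then invoke the classical Merton result for log utility (the paper cites \cite{kallsen.goll.00} rather than \cite{karatzas.shreve.98, merton.71}, and also notes that $\tilde\sigma$ is bounded away from zero, which you should add). Part~(ii) differs in its algebraic organization. The paper first computes $d\tilde\pi_t$ by It\^o's formula and the ODE~\eqref{eq:ode2}, then applies Yor's formula to rewrite $\tilde V_t/\tilde S_t$ as a single stochastic exponential, and finally obtains $\varphi=\tilde\pi\cdot(\tilde V/\tilde S)$ by integration by parts. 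Your route---writing $\varphi_t=Y_t\tilde V_t/((c+g(Y_t))S_t)$ and applying It\^o to $\log\varphi$---is more direct: the Brownian and drift cancellations occur in one pass, and the fact that the ODE~\eqref{eq:ode2} is \emph{exactly} the condition that the $dt$-part of $d\varphi$ vanish is displayed transparently. The paper's approach yields the formula for $d\tilde\pi_t$ as a byproduct; yours avoids Yor's formula and the separate integration-by-parts step. Both hinge on the same two ingredients: the ODE kills the drift, and the smooth-pasting conditions $g(y)=yg'(y)$ at $y\in\{1,\bar s\}$ convert the local-time coefficient into the stated form.
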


Note that the first resp.\ second cases in \eqref{eq:jump} occur if the initial fraction of wealth in stock $\eta_S S_0/(\eta_B+\eta_S S_0)$ lies below $\underline{\theta}=(1+c)^{-1}$ resp.\ above $\overline{\theta}=(1+c/\bar{s})^{-1}$. In this case, there is a jump from the initial position $(\varphi^0_{0-},\varphi_{0-})=(\eta_B,\eta_S)$, which moves the initial fraction of wealth to the nearest boundary of the interval $[\underline{\theta},\overline{\theta}]$.  Since this initial bulk trade involves the puchase resp.\ sale of stocks, the initial value of $\tilde{S}$ is chosen so as to match the initial ask resp.\ bid price in this case.

\begin{proof}
Since $d\tilde{S}/\tilde{S}=d\tilde{X}_t:=\tilde{\mu}(Y_t)dt+\tilde{\sigma}(Y_t)dW_t$ is an It\^o process with bounded coefficients and $\tilde{\sigma}$ is bounded away from zero, its optimal portfolio/consumption pair is characterized by standard results for frictionless markets, cf., e.g., \cite[Theorem 3.1]{kallsen.goll.00}. In particular, the optimal fraction of wealth (in terms of $\tilde{S}$) invested in stocks is given by Merton's rule, i.e., equals
$$\frac{\tilde{\mu}(Y_t)}{\tilde{\sigma}(Y_t)^2}=\frac{g(Y_t)}{c+g(Y_t)}=\tilde{\pi}_t$$
as claimed. Moreover, the optimal wealth process, the optimal consumption rate and the optimal numbers of bonds resp.\ stocks are indeed given by $\tilde{V}$, $\kappa$, $\varphi^0$, and~$\varphi$. Now It\^o's formula and the ODE \eqref{eq:ode2} for $g$ imply
$$d\tilde{\pi}_t=\left(\frac{\delta g(Y_t)}{c+g(Y_t)}\right)dt+\left(\frac{c \sigma g'(Y_t)Y_t}{(c+g(Y_t))^2}\right)dW_t+\left(\frac{c g'(Y_t) Y_t}{(c+g(Y_t))^2}\right)(dL_t-dU_t).$$
Moreover, by Yor's formula, we have
\begin{align*}
&\frac{\scr{E}(\int_0^\cdot \frac{g(Y_t)}{c+g(Y_t)}d\tilde{X}_t-\int_0^\cdot \delta dt)}{\scr{E}(\tilde{X})}\\
&\qquad=\scr{E}\left(\int_0^\cdot\frac{c}{c+g(Y_t)}d(\langle \tilde{X},\tilde{X} \rangle_t-\tilde{X}_t)-\int_0^\cdot \delta dt\right)\\
&\qquad=\scr{E}\left(\int_0^\cdot \frac{c^2\sigma^2 g'(Y_t)^2 Y_t^2}{g(Y_t)^2(c+g(Y_t))^2}dt-\int_0^\cdot \delta dt-\int_0^\cdot \frac{c \sigma g'(Y_t) Y_t}{g(Y_t)(c+g(Y_t))}dW_t\right).
\end{align*}
Taking into account that the process $L-U$ is continuous and of finite variation, integration by parts finally yields the claimed representation for $\varphi$. 
\end{proof}

The representation \eqref{eq:phi} and the definition of $\tilde{S}_0$ via \eqref{eq:jump} imply that stocks are only purchased (resp.\ sold) when $\tilde{S}=S$ (resp.\ $\tilde{S}=(1-\lambda)S$). Hence the optimal portfolio/consumption process $((\varphi^0,\varphi),\kappa)$ for $\tilde{S}$ is also admissible for the bid-ask process $((1-\lambda)S,S)$. Since shares can be bought at least as cheaply and sold as least as expensively when trading $\tilde{S}$ instead of $(1-\lambda)S,S)$, it follows that the maximal expected utilities coincide, i.e., $\tilde{S}$ is a shadow price. Made precise, this is the content of the following analogue of \cite[Theorem 4.6]{kallsen.muhlekarbe.10}.

\begin{theorem}\label{thm:shadow2}
The portfolio/consumption process $((\varphi^0,\varphi),\kappa)$ from Lemma \ref{thm:shadow} is also optimal for the bid-ask process $((1-\lambda)S,S)$. In particular, $\tilde{S}$ is a shadow price in this market.
\end{theorem}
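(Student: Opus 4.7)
The proof follows the standard shadow-price template, proceeding via the two inequalities $u_{(1-\lambda)S,S}(\eta_B,\eta_S)\le u_{\tilde S,\tilde S}(\eta_B,\eta_S)$ and the reverse; combined with the fact that the strategy of Lemma~\ref{thm:shadow} attains the right-hand side in the frictionless $\tilde S$-market, both claims of the theorem follow at once.

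For the upper bound, fix any admissible $((\phi^0,\phi),\kappa)$ for the bid-ask market. Because $\tilde S\in[(1-\lambda)S,S]$, buying stocks in the frictionless $\tilde S$-market is no more expensive and selling no less profitable than in the bid-ask market. To make this precise, leave $(\phi,\kappa)$ unchanged and replace $\phi^0$ by the bond position $\bar\phi^0$ generated by the frictionless self-financing rule $d\bar\phi^0_t=-\tilde S_t\,d\phi_t-\kappa_t\,dt$ with $\bar\phi^0_{0-}=\eta_B$. Splitting $\phi=\phi^\uparrow-\phi^\downarrow$ and comparing integrands using $(1-\lambda)S\le\tilde S\le S$ yields $\bar\phi^0\ge\phi^0$ pathwise, so that the frictionless liquidation wealth satisfies $\bar\phi^0+\phi\tilde S\ge\phi^0+\phi^+(1-\lambda)S-\phi^-S\ge0$. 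Thus $((\bar\phi^0,\phi),\kappa)$ is admissible in the frictionless $\tilde S$-market and produces the same consumption utility, giving the desired inequality.

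For the substantive lower bound, I would show that the process $((\varphi^0,\varphi),\kappa)$ of Lemma~\ref{thm:shadow} is itself admissible in the bid-ask market. From the representation~\eqref{eq:phi} write $\varphi=\varphi^\uparrow-\varphi^\downarrow$, where $\varphi^\uparrow$ is the nondecreasing component generated by the $dL$-integral and $\varphi^\downarrow$ the nondecreasing component generated by the $dU$-integral. Since $L$ grows only on $\{Y=1\}$ and $U$ only on $\{Y=\bar s\}$, and these two sets are disjoint, the decomposition is predictable and the components do not grow simultaneously, as required by Definition~\ref{defi:selffinancing}. On $\{Y=1\}$ one has $\tilde S=Sg(1)/1=S$, and on $\{Y=\bar s\}$ one has $\tilde S=Sg(\bar s)/\bar s=(1-\lambda)S$, so
\[
\int_0^t \tilde S_u\,d\varphi_u \;=\; \int_0^t S_u\,d\varphi^\uparrow_u \;-\; \int_0^t (1-\lambda)S_u\,d\varphi^\downarrow_u.
\]
Consequently, the frictionless self-financing identity $d\varphi^0=-\tilde S\,d\varphi-\kappa\,dt$ coincides exactly with the bid-ask self-financing relation~\eqref{e:selff2}; the initial bulk trade from $(\eta_B,\eta_S)$ to the no-trade region is accommodated by the choice~\eqref{eq:jump}, which forces $\tilde S_0=S_0$ when stocks are purchased at time $0$ and $\tilde S_0=(1-\lambda)S_0$ when they are sold. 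Admissibility then reduces to verifying nonnegativity of the bid-ask liquidation value, which follows from $\tilde\pi_t\in[0,1]$ and hence $\varphi^0_t,\varphi_t\ge0$.

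Combining the two inequalities yields $u_{(1-\lambda)S,S}(\eta_B,\eta_S)=u_{\tilde S,\tilde S}(\eta_B,\eta_S)$, so $\tilde S$ is a shadow price; since the strategy of Lemma~\ref{thm:shadow} achieves this common value in the frictionless market while also being admissible in the bid-ask market with the same consumption $\kappa$, it is optimal there as well. I expect no conceptual obstacle; the main care lies in the bookkeeping for the decomposition $\varphi=\varphi^\uparrow-\varphi^\downarrow$ (verifying predictability and non-overlapping supports), in the clean identification of the stochastic integrals on the disjoint sets $\{Y=1\}$ and $\{Y=\bar s\}$ using the boundary values $g(1)=1$ and $g(\bar s)=(1-\lambda)\bar s$, and in the separate handling of the initial jump.
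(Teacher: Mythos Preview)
Your proposal is correct and follows precisely the standard shadow-price argument that the paper invokes by reference: the paper's own proof simply reads ``This follows verbatim as in the proof of \cite[Theorem 4.6]{kallsen.muhlekarbe.10}'', and what you have written is exactly that argument spelled out---the domination inequality $u_{(1-\lambda)S,S}\le u_{\tilde S,\tilde S}$ via pathwise comparison of bond positions, and the reverse inequality by checking that the $\tilde S$-optimal strategy is self-financing and admissible for the bid-ask spread because purchases (the $dL$-part of \eqref{eq:phi}) occur only where $\tilde S=S$ and sales (the $dU$-part) only where $\tilde S=(1-\lambda)S$. One small point worth making explicit is that the integrand in \eqref{eq:phi} is strictly positive on $\{Y=1\}\cup\{Y=\bar s\}$ (since $g'(1)=1$, $g'(\bar s)=1-\lambda$, and all other factors are positive), so $\varphi^\uparrow$ and $\varphi^\downarrow$ are indeed nondecreasing; otherwise your bookkeeping is complete.
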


\begin{proof}
This follows verbatim as in the proof of \cite[Theorem 4.6]{kallsen.muhlekarbe.10}.
\end{proof}

In view of Lemma \ref{thm:shadow} and the definition of $g$, the optimal proportion $\tilde{\pi}$ of stocks in terms of the shadow price $\tilde{S}$ takes values in the interval $[(1+c)^{-1},(1+c/((1-\lambda)\overline{s})^{-1}]$. Translating this into terms of the ask price $S$, we find that the optimal fraction $\pi$ in terms of $S$ is kept within $[(1+c)^{-1},(1+c/\overline{s})^{-1}]$, i.e., the lower resp.\ upper boundaries of the no-trade region are given by $\underline{\theta}=(1+c)^{-1}$ resp.\ $\overline{\theta}=(1+c/\overline{s})^{-1}$.
The following series expansions are immediate consequences of~\eqref{eq:c expans}
and~\eqref{eq:s expans} (upon taking one additional term in both expansions).

\begin{corollary}\label{cor:notradeasymp}
The lower and upper boundaries of the no-trade region in terms of the ask price $S$ have the expansions
  \begin{align*}
    \underline{\theta}&=\frac{1}{1+c} = \theta - \left(\frac34 \theta^2(1-\theta)^2\right)^{1/3} \lambda^{1/3}
    - \frac{\delta \theta}{6\sigma^2}\left( \frac{6}{\theta(1-\theta)} \right)^{2/3} \lambda^{2/3} \\
    & \qquad + \frac{\theta(9(1-\theta)^2(1-2\theta+2\theta^2)-4\delta^2/\sigma^4+6\delta/\sigma^2(3-8\theta+5\theta^2))}{360(1-\theta)} \frac{6}{\theta(1-\theta)}\lambda \\
    &\qquad + O(\lambda^{4/3})
  \end{align*}
  and
  \begin{align*}
   \overline{\theta}&= \frac{1}{1+c/\bar{s}} = \theta + \left(\frac34 \theta^2(1-\theta)^2\right)^{1/3} \lambda^{1/3}
    - \frac{\delta \theta}{6\sigma^2}\left( \frac{6}{\theta(1-\theta)} \right)^{2/3} \lambda^{2/3} \\
    &\qquad - \frac{\theta(3(1-\theta)^2(3-26\theta+26\theta^2)-4\delta^2/\sigma^4+6\delta/\sigma^2(3-8\theta+5\theta^2))}{360(1-\theta)} \frac{6}{\theta(1-\theta)}\lambda  \\
    &\qquad + O(\lambda^{4/3}),
  \end{align*}
  respectively. Consequently, the size of the no-trade region in terms of $S$ satisfies
  \begin{align*}
    \overline{\theta} - & \underline{\theta} = \frac{1}{1+c/\bar{s}} - \frac{1}{1+c} \\
    &= (6\theta^2(1-\theta)^2)^{1/3}\lambda^{1/3} \\
    & - \frac{\theta(3(1-\theta)^2(3-16\theta+16\theta^2)-4\delta^2/\sigma^4+6\delta/\sigma^2(3-8\theta+5\theta^2))}{180(1-\theta)} \frac{6}{\theta(1-\theta)}\lambda \\
     & + O(\lambda^{4/3}).
  \end{align*}
\end{corollary}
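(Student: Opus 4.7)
The plan is to extend the expansions \eqref{eq:c expans} and \eqref{eq:s expans} by one additional term—namely the coefficient of $\lambda^{3/3}=\lambda$—and then substitute into the identities $\underline{\theta}=1/(1+c)$ and $\overline{\theta}=\bar{s}/(\bar{s}+c)$, which come directly from the formulas $\underline{\theta}=(1+c)^{-1}$ and $\overline{\theta}=(1+c/\bar{s})^{-1}$ established after Theorem~\ref{thm:shadow2}.

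For the first step I would apply the Lagrange inversion formula \eqref{eq:lag inv} at $k=3$. This requires the coefficients $a_0,a_1,a_2$ of the power series displayed on the line preceding \eqref{eq:lag inv}, which are in turn determined by the Taylor coefficients $a_{ij}$ of $g(s,c)$ near $(1,\bar{c})$ (computable recursively from the ODE~\eqref{eq:ode2} together with the initial conditions $g(1,c)=g'(1,c)=1$) and by the coefficients $d_i$ of $H(c)$ from~\eqref{eq:sbar of c}. All of these are algorithmic, so the coefficient of $\lambda$ in $c-\bar{c}$ is produced explicitly. Plugging this four-term expansion of $c$ into the analytic function $H(c)$ yields the matching expansion of $\bar{s}$ to order~$\lambda$.

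The second step is purely algebraic. Writing $\Delta:=c-\bar{c}$ and $\varepsilon:=\bar{s}-1$—both of order $\lambda^{1/3}$—I would use
\begin{align*}
\underline{\theta} &= \frac{1}{1+c} = \theta - \theta^2\Delta + \theta^3\Delta^2 - \theta^4\Delta^3 + O(\lambda^{4/3}),\\
\overline{\theta} &= \frac{\bar{s}}{\bar{s}+c} = \theta\,\frac{1+\varepsilon}{1+\theta(\Delta+\varepsilon)},
\end{align*}
expanding the second identity to order $\lambda$ via the geometric series and collecting terms of equal fractional order in~$\lambda$. Subtracting then gives $\overline{\theta}-\underline{\theta}$, in which the common $\lambda^{2/3}$-contributions cancel (leaving no $\lambda^{2/3}$ term, as displayed in the corollary), so that the leading term is of order $\lambda^{1/3}$, followed by the explicit $\lambda$-correction.

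The main obstacle is not conceptual—existence of the expansions to arbitrary order is already guaranteed by Lemma~\ref{le:existence} and \eqref{eq:lag inv}—but purely combinatorial: one must correctly track products $\Delta^i\varepsilon^j$ up to total weight~$\lambda$ and simplify the resulting rational expressions in $\theta$ and $\delta/\sigma^2$. Carrying this out by hand is tedious but mechanical; in practice a computer algebra system is essentially unavoidable for producing the precise numerical coefficients displayed in the statement.
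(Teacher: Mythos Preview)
Your proposal is correct and follows exactly the approach indicated in the paper: the corollary is stated there as an immediate consequence of the expansions~\eqref{eq:c expans} and~\eqref{eq:s expans} ``upon taking one additional term in both expansions,'' and you have simply spelled out how that extra term is obtained from the Lagrange inversion formula~\eqref{eq:lag inv} and how the substitution into $\underline{\theta}=(1+c)^{-1}$, $\overline{\theta}=(1+c/\bar{s})^{-1}$ is carried out.
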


Further terms of these expansions can be algorithmically computed, if desired.

The first correction terms are of order $\lambda^{1/3}$, which has been conjectured in \cite[Remark B.3]{shreve.soner.94} and proved in \cite{janecek.shreve.04}. Whereas the first-order corrections are symmetric around the Merton proportion $\theta$, the second correction terms of order $\lambda^{2/3}$ are both negative. This means that it is optimal to start transacting for smaller fractions of stock than indicated by the first-order expansions. As pointed out by \cite{janecek.shreve.04}, an intuitive explanation is that consumption from the bank account continuously increases the proportion of stock. Accordingly, the $\lambda^{2/3}$-terms vanish in the limit for $\delta \to 0$. Interestingly, however, the impatience rate $\delta>0$ only shows up starting from the third term of the expansion for the \emph{width} of the no-trade region even in the presence of consumption.

\begin{remark}
We now compare the expansions from Corollary~\ref{cor:notradeasymp} to the results of Jane{\v{c}}ek and Shreve~\cite{janecek.shreve.04} mentioned above.
We rename their parameter~$\lambda$ as~$\check{\lambda}$; it is related to our~$\lambda$
by $\check{\lambda}=\lambda/(2-\lambda)$ (see the footnote in Section~\ref{s:setup}).
The bounds of the no-trade
region become
\begin{equation}\label{eq:js lower}
  \frac{1}{1+2c/(2-\check{\lambda})} = \theta-\left(\frac32 \theta^2(1-\theta)^2\right)^{1/3} \check{\lambda}^{1/3}
  - \frac{2\theta\delta}{\sigma^2}(12\theta^2(1-\theta)^2)^{-1/3} \check{\lambda}^{2/3} 
  + O(\check{\lambda})
\end{equation}
and
\begin{equation}\label{eq:js upper}
  \frac{1}{1+2c/((2-\check{\lambda})\bar{s})} = \theta +
    \left(\frac32 \theta^2(1-\theta)^2\right)^{1/3} \check{\lambda}^{1/3}
    - \frac{2\theta\delta}{\sigma^2}(12\theta^2(1-\theta)^2)^{-1/3} \check{\lambda}^{2/3}
     + O(\check{\lambda}).
\end{equation}
The $\check{\lambda}^{1/3}$-terms agree with those established in~\cite{janecek.shreve.04},
while our $\check{\lambda}^{2/3}$-terms are not those that were conjectured
in~\cite{janecek.shreve.04}, based on heuristic considerations.

We are indebted to Steve Shreve, who kindly checked these calculations again. While the arguments worked out in \cite{janecek.shreve.04} turned out to be perfectly correct, he found that an assumption on which these calculations were based did not hold true. It was stated in equation (4.27) of \cite{janecek.shreve.04} that ``there is considerable 
evidence'' that the coefficient of the $\lambda$ term in the expansion of  the value function has a certain nonzero value, while subsequent calculations by Steve Shreve indicate that this coefficient is zero.  When this term is set to zero, then heuristic calculations based on the approach of \cite{janecek.shreve.04} lead to \eqref{eq:js lower} and \eqref{eq:js upper} as well.
\end{remark}

\section{Characterization and asymptotic expansion of the value function}

In the frictionless market with price process $\tilde{S}$, standard control methods allow to characterize the value function for the primal problem from Definition \ref{def:opti} as the unique classical solution to the corresponding HJB equation. This yields the maximal utility in the frictionless shadow market, which coincides with the maximal utility in the original market with bid-ask process $((1-\lambda)S,S)$ by Theorem \ref{thm:shadow2}.

\begin{theorem}\label{thm:u ODE}
Let $y$, $Y$, and $\tilde{S}$ be defined as in Lemma~\ref{thm:shadow}. Then, if $\lambda>0$ is sufficiently small, we have
\begin{equation}\label{eq:u repr}
  u_{(1-\lambda)S,S}(\eta_B,\eta_S)=\log(\eta_B+\eta_S \tilde{S}_0)/\delta + w(y),
\end{equation}
where $w$ is the unique (classical) solution to the ODE
\begin{equation}\label{eq:w ODE}
  \frac{\sigma^2}{2} s^2 w''(s)+\left(\mu+\delta\left(1+\frac{g(s)}{c}\right)\right)s w'(s)-\delta    
  w(s)+\frac{\tilde{\mu}^2(s)}{2\delta\tilde{\sigma}^2(s)}+\log(\delta)-1=0,
\end{equation}
on $[1,\bar{s}]$ with Neumann boundary conditions
\[
  w'(1)=w'(\bar{s})=0.
\]
\end{theorem}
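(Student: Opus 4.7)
By Theorem \ref{thm:shadow2}, $u_{(1-\lambda)S,S}(\eta_B,\eta_S)=u_{\tilde{S},\tilde{S}}(\eta_B,\eta_S)$, so it suffices to compute the log-optimal expected utility in the frictionless market with price process $\tilde{S}$. Since the drift and volatility of $\tilde{S}$ are functions of the reflected diffusion $Y$ (Proposition \ref{prop:shadow}), the natural state variable for the dynamic programming problem is the pair $(\tilde{V},Y)$, where $\tilde{V}_t=\varphi_t^0+\varphi_t\tilde{S}_t$ denotes frictionless wealth. My plan is to verify that
$$V(v,y):=\frac{\log v}{\delta}+w(y),\qquad (v,y)\in(0,\infty)\times[1,\bar{s}],$$
is the value function of this frictionless control problem.

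First I would formally derive the ODE from the HJB equation. Plugging the ansatz $V(v,y)=\log(v)/\delta+w(y)$ into the HJB equation associated with $(\tilde{V},Y)$, and noting that $V_{vv}=-1/(\delta v^2)$, $V_{vy}=0$, the Hamiltonian splits pointwise into three independent pieces. Optimizing over the consumption rate gives $\kappa^{*}=\delta v$ and contributes $\log(\delta)-1$; optimizing over the risky-asset fraction yields Merton's rule $\pi^{*}=\tilde{\mu}(y)/\tilde{\sigma}^{2}(y)$ and contributes $\tilde{\mu}^{2}(y)/(2\delta\tilde{\sigma}^{2}(y))$; and the remaining pieces reproduce the generator of $Y$ applied to $w$ and the $-\delta w$ term. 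This gives the displayed ODE for $w$. The Neumann conditions $w'(1)=w'(\bar{s})=0$ are forced by the fact that $Y$ carries reflection terms $Y\,dL$ and $-Y\,dU$: when Itô's formula is applied to $e^{-\delta t}V(\tilde{V}_t,Y_t)$, the only way to annihilate the integrals against $dL$ and $dU$ is to demand that $w'$ vanish at the reflecting boundaries.

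Next I would establish existence and uniqueness of a classical solution $w\in C^{2}([1,\bar{s}])$ to this linear boundary-value problem. Since $g$ is analytic on $[1,\bar{s}]$ by Lemma \ref{le:existence} and $\tilde{\sigma}$ is bounded away from zero on the same interval, the ODE is regular; it is of Sturm--Liouville type with strictly negative zero-order coefficient $-\delta<0$. A nontrivial solution of the corresponding homogeneous Neumann problem would attain a positive interior maximum (or negative minimum) at which $w'=0$ and $w''\le 0$ (resp.\ $\ge 0$), forcing $-\delta w\le 0$ (resp.\ $\ge 0$) and contradicting the sign of $w$ at that extremum. Hence the homogeneous problem has only the trivial solution, and the Fredholm alternative supplies a unique smooth solution of the inhomogeneous problem.

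Finally, for the verification step, I would apply Itô's formula to $e^{-\delta t}V(\tilde{V}_t,Y_t)+\int_{0}^{t}e^{-\delta s}\log(\kappa_s)\,ds$ for an arbitrary admissible portfolio/consumption pair in the $\tilde{S}$-market. Using the HJB identity, the ODE for $w$, and the Neumann conditions to kill the $dL$ and $dU$ contributions, this expression is a local supermartingale, with equality along the controls of Lemma \ref{thm:shadow}. After localization, taking expectations, and letting $t\to\infty$, the equality $u_{\tilde{S},\tilde{S}}(\eta_B,\eta_S)=V(\tilde{V}_0,Y_0)=\log(\eta_B+\eta_S\tilde{S}_0)/\delta+w(y)$ would follow provided one controls the transversality term $\E[e^{-\delta t}V(\tilde{V}_t,Y_t)]\to 0$. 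The hard part will be precisely this transversality step: boundedness of $w$ on the compact interval $[1,\bar{s}]$ handles the $w$-contribution, but the $\log\tilde{V}$-contribution requires a growth estimate on $\tilde{V}$, which in turn exploits the boundedness of $\tilde{\mu},\tilde{\sigma}$ together with $\delta>0$ to ensure that $e^{-\delta t}\E[\log^{+}\tilde{V}_t]$ vanishes as $t\to\infty$ along every admissible strategy.
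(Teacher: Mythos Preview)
Your proposal is correct in outline, but it proceeds differently from the paper on both of its two main steps.

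For existence and uniqueness of $w$, you invoke the maximum principle and the Fredholm alternative. This is perfectly valid for the theorem as stated. The paper instead mimics the proof of Lemma~\ref{le:existence}: it solves the initial value problem with $w(1)=w_1+\xi$, $w'(1)=0$ as an analytic function of $(s,c,\xi)$, and then determines $\xi=\xi(c)$ analytically from the boundary condition $w'(\bar{s}(c))=0$. The payoff of the paper's route is that $w$ is obtained as an \emph{analytic} function of $c-\bar{c}$ (hence of $\lambda^{1/3}$), which is exactly what is needed for the asymptotic expansion of the value function in the subsequent corollary. Your Fredholm argument yields existence and uniqueness but not this analytic dependence, so you would have to supply a separate argument if you later want the expansion.

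For the representation~\eqref{eq:u repr}, you carry out a full HJB verification: supermartingale for every admissible pair, martingale for the optimal one, plus transversality. The paper avoids the supermartingale half entirely, because optimality of $((\varphi^0,\varphi),\kappa)$ in the $\tilde{S}$-market is already established in Lemma~\ref{thm:shadow} via the frictionless result of Goll--Kallsen. Thus the paper only has to \emph{evaluate} $E\big[\int_0^\infty e^{-\delta t}\log(\kappa_t)\,dt\big]$ for that specific $\kappa$. It does so by showing that
\[
M_t=\int_0^t e^{-\delta u}\log(\kappa_u)\,du+e^{-\delta t}\Big(\tfrac{1}{\delta}\log\tilde{V}_t+w(Y_t)\Big)
\]
is an $L^2$-bounded (hence uniformly integrable) martingale along the optimal controls, and that the second summand tends to zero in probability. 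Your approach also works, but the transversality step is harder in your setting: you must control $e^{-\delta t}E[\log\tilde{V}_t]$ uniformly over \emph{all} admissible strategies, including those that consume aggressively and drive $\log\tilde{V}_t$ very negative. The paper's route sidesteps this because the optimal $\log\tilde{V}_t$ is an It\^o process with bounded coefficients.
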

\begin{proof}

  We first show that the boundary value problem~\eqref{eq:w ODE} has a unique solution,
  which is an analytic function.
  This is similar to the first part of the proof of Lemma~\ref{le:existence}.
  We first solve the ODE~\eqref{eq:w ODE} with initial conditions $w(1)=w_1+\xi$
  and $w'(1)=0$, with arbitrary~$\xi$, and where
  \[
    w_1 = \frac{\mu^2-2\delta\sigma^2}{2\delta^2\sigma^2} + \frac{\log \delta}{\delta}
  \]  
  is the value to which $w(1)$ converges for $\lambda\to0$.
  This yields an analytic function
  \[
    w(s,c,\xi) = w_1 + \xi + \sum_{i\geq2} \sum_{j,k\geq 0} a_{ijk} (s-1)^i \xi^j (c-\bar{c})^k,
  \]
  for certain real coefficients~$a_{ijk}$.
  (Recall that~$c$ tends to $\bar{c}=(1-\theta)/\theta$ as $\lambda\to0$.)
  To satisfy the boundary condition $w'(\bar{s})=0$,
  we solve the equation
  \[
    \left. \frac{\partial}{\partial s} w(s,c,\xi) \right|_{s=\bar{s}(c)} = 0
  \]
  for~$\xi$.
  After inserting the expansion~\eqref{eq:sbar of c} for~$\bar{s}$, we obtain
  a solution $\xi(c)$, analytic at $c=\bar{c}$. Then $w(s)=w(s,\xi(c(\lambda)), c(\lambda))$
  solves the boundary value problem.
  
  Now fix $\lambda>0$ sufficiently small, such that Lemma~\ref{le:existence},
  Corollary~\ref{cor:notradeasymp}, and the above existence result for~$w$ can be applied. Then, in particular, the mapping $y \mapsto w(y)$ can be extended to a $C^2$-mapping on the whole real line, e.g., by attaching suitable parabolas at $y=1$ and $y=\bar{s}$. It\^o's formula in turn shows that
\begin{align*}
M_t:=&\int_0^t e^{-\delta u} \log(\kappa_u)du+e^{-\delta t}\bigg(\frac{\log(\tilde{V}_t)}{\delta}+w(Y_t)\bigg)\\
=&M_0+\int_0^t e^{-\delta u} \left(\frac{\tilde{\sigma}(Y_u) \tilde{\pi}_u}{\delta}+\sigma Y_u w'(Y_u)\right)dW_u,
\end{align*}
where we have inserted the dynamics of $\tilde{V}$ (cf.\ Lemma 4.5) and $Y$ (cf.\ Lemma 4.2) as well as the ODE and boundary conditions for $w$. In particular, the process $(M_t)_{t \geq 0}$ is a local martingale. By the proof of Proposition \ref{prop:shadow}, we have  $\tilde{\sigma}(y) = \sigma g'(y)/(g(y)/y) \leq \sigma/(1-\lambda)$ for $y \in [1,\bar{s}]$, such that $0 \leq \tilde{\pi} \leq (1+c/((1-\lambda)\bar{s}))^{-1} \leq (1+c/\bar{s})^{-1}$ and $1 \leq Y \leq \bar{s}$ imply
\begin{align*}
&\int_0^t  E\left[e^{-2\delta u} \left(\frac{\tilde{\sigma}(Y_u) \tilde{\pi}_u}{\delta}+\sigma Y_u w'(Y_u)\right)^2\right]du \leq \frac{1}{2\delta}  \left(\frac{\sigma/(1-\lambda)}{\delta(1+c/\bar{s})}+\sigma \bar{s}|w'|_{\max}\right)^2,
\end{align*}   
for $|w'|_{\max}:=\max_{y \in [1,\bar{s}]}|w'(y)|$. Therefore the process $(M_t)_{t \geq 0}$ is in fact a true martingale and also uniformly integrable, since it is bounded in $L^2(P)$ by the It\^o isometry.  Hence $M_t$ converges in $L^1(P)$ to a random variable $M_{\infty}$ with
\begin{equation}\label{eq:convergence}
E(M_{\infty})=E(M_0)=\log(\tilde{V}_0)/\delta+w(Y_0)=\log(\eta_B+\eta_S \tilde{S}_0)/\delta+w(y).
\end{equation}
Now notice that  $e^{-\delta t}w(Y_t) \to 0$, because $w(\cdot)$ is bounded on $[1,\bar{s}]$. Moreover, as $\log(\tilde{V}_t)$ is an It\^o process with bounded coefficients, it follows that $e^{-\delta t}\log(\tilde{V}_t)/\delta \to 0$ in $L^2(P)$ and hence also in probability. Therefore $M_t \to \int_0^\infty e^{-\delta t} \log(\kappa_t) dt$ in probability, such that the a.s.\ uniqueness of limits in probability yields
$$E\left(\int_0^\infty e^{-\delta t} \log(\kappa_t) dt\right)=E(M_{\infty}).$$
Combined with \eqref{eq:convergence}, this completes the proof.
\end{proof}

It does not seem possible to come up with an explicit solution to the above ODE even in terms of the function $g$, unlike in the case without consumption treated in \cite{dumas.luciano.91,taksar.al.88}. However, we can again derive an expansion in powers of $\lambda^{1/3}$.
Note that this expansion cannot be recovered by letting the relative risk aversion
tend to~$1$ in the corresponding result for power utility~\cite{janecek.shreve.04}.
The reason is that the value function for power utility does not converge to the value function
for log utility.

\begin{corollary}
  In addition to the assumptions of Theorem~\ref{thm:u ODE}, suppose that our portfolio
  starts on the Merton line, i.e., 
  $$\pi_0= \frac{\eta_S S_0}{\eta_B+\eta_S S_0}=\theta.$$
  Then the value function satisfies
  \begin{align}
    u_{(1-\lambda)S,S}(\eta_B,\eta_S)=& \frac{1}{\delta} \log(\delta(\eta_B+\eta_S S_0))    
    +\frac{1}{\delta^2}\left(\frac{\mu^2}{2\sigma^2}-\delta\right) \notag \\
    &- \frac{3^{2/3}(\mu(\sigma^2-\mu))^{4/3}}{4\cdot 2^{1/3} \delta^2 \sigma^{10/3}}\lambda^{2/3}
    +O(\lambda). \label{eq:u expans}
  \end{align}

  This expansion can be continued to any order, with algorithmically computable
  coefficients.
\end{corollary}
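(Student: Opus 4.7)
The plan is to start from the representation furnished by Theorem \ref{thm:u ODE},
\[
  u_{(1-\lambda)S,S}(\eta_B,\eta_S) = \log(\eta_B+\eta_S\tilde{S}_0)/\delta + w(y),
\]
and expand each summand as a power series in $\lambda^{1/3}$. First I would identify $y$ and $\tilde{S}_0$. The assumption $\pi_0=\theta$ is equivalent to $\eta_B = \bar{c}\eta_S S_0$ with $\bar{c}=(1-\theta)/\theta$, and the expansions of $c$ and $\bar{s}$ in Lemma \ref{le:existence} imply that the initial fraction of wealth in stock lies strictly inside the open no-trade region for all small $\lambda$. Hence the third case of \eqref{eq:jump} applies, so $y = c/\bar{c}$ and $\tilde{S}_0 = S_0 g(y)/y$, whence $\eta_B+\eta_S\tilde{S}_0 = \eta_S S_0(\bar{c}+\bar{c}g(y)/c)$.

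Second, I would exploit the fact that every quantity involved is analytic in $\lambda^{1/3}$. By Lemma \ref{le:existence}, $c-\bar{c}$ is analytic in $\lambda^{1/3}$ and vanishes at $\lambda=0$, so the same holds for $y-1=(c-\bar{c})/\bar{c}$, while $g(s,c)$ is a bivariate analytic function near $(1,\bar{c})$ with recursively computable Taylor coefficients. The proof of Theorem \ref{thm:u ODE} exhibits $w(s,c,\xi)$ as a trivariate analytic function together with an analytic root $\xi(c)$ of the Neumann equation at $\bar{s}(c)$. Composition yields analytic functions of $\lambda^{1/3}$ for both $w(y)$ and $\log(\eta_B+\eta_S\tilde{S}_0)$, which I can add and truncate at the desired order.

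Third, I would extract the coefficients one by one. At order $\lambda^0$ the sum reproduces the classical Merton value: the constant $w_1=(\mu^2-2\delta\sigma^2)/(2\delta^2\sigma^2)+\log(\delta)/\delta$ singled out in the proof of Theorem \ref{thm:u ODE} combines with $\log(\eta_B+\eta_S S_0)/\delta = \log(\eta_S S_0/\theta)/\delta$ to produce the first two displayed terms of \eqref{eq:u expans}. The qualitative heart of the corollary is the vanishing of the $\lambda^{1/3}$ coefficient; I would verify it by differentiating $w(c/\bar{c},c,\xi(c))+\log(\bar{c}+\bar{c}g(c/\bar{c},c)/c)/\delta$ once in $c$ at $c=\bar{c}$ and observing that the Neumann condition $w'(1)=0$, combined with the ODE \eqref{eq:w ODE} evaluated at $s=1$, forces this derivative to vanish. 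The $\lambda^{2/3}$ coefficient is then obtained by a routine next-order substitution, feeding in the $\lambda^{2/3}$-term of $c$ from \eqref{eq:c expans} together with one additional Taylor coefficient of $g$ and $w$; simplification gives the stated value. All further coefficients are obtained in exactly the same way, which establishes algorithmic computability.

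The main obstacle is the explicit cancellation of the $\lambda^{1/3}$-coefficient: the contributions from the logarithm and from $w$ are individually nonzero, and only cancel through careful joint use of the Neumann boundary condition at $s=1$ and the ODE \eqref{eq:w ODE}. Once this is understood, the computation of the $\lambda^{2/3}$ coefficient is purely algebraic.
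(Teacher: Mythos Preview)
Your overall plan---start from~\eqref{eq:u repr}, expand each summand in powers of $\lambda^{1/3}$, and add---is precisely what the paper does, and your identification $y=c/\bar c$, $\tilde S_0=S_0 g(y)/y$ is correct. However, your description of the ``main obstacle'' misidentifies the structure. There is \emph{no} cancellation at order $\lambda^{1/3}$ (nor at order $\lambda^{2/3}$) between the logarithm term and $w$: the logarithm term is already $O(\lambda)$ on its own. Indeed, writing $\eta_B+\eta_S\tilde S_0=(\eta_B+\eta_S S_0)\bigl(1+\theta(g(y)/y-1)\bigr)$, the boundary conditions $g(1)=g'(1)=1$ give $g(y)/y-1=\tfrac12 g''(1)(y-1)^2+O((y-1)^3)$, and from the ODE~\eqref{eq:ode2} one has $g''(1)=2/(c+1)-2\theta=O(c-\bar c)$. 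Since $y-1=(c-\bar c)/\bar c$ and $(c-\bar c)^3=O(\lambda)$, this yields $g(y)/y-1=O(\lambda)$, hence $\log(\eta_B+\eta_S\tilde S_0)/\delta=\log(\eta_B+\eta_S S_0)/\delta+O(\lambda)$. This is exactly the computation in the paper's proof. Thus both the vanishing of the $\lambda^{1/3}$ coefficient and the value of the $\lambda^{2/3}$ coefficient come entirely from $w(y)$; no ``careful joint use'' of the Neumann condition and the ODE~\eqref{eq:w ODE} across the two summands is needed.

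This is not a gap in the sense that your computation, if carried out, would still produce the correct result (your own check that the $c$-derivative of the logarithm vanishes at $c=\bar c$ already shows its $\lambda^{1/3}$ contribution is zero, contradicting your claim that it is nonzero). But it obscures the reason the expansion is so clean and would lead you to look for a cancellation that is not there.
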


\begin{proof}
  We expand the first term in~\eqref{eq:u repr}:
  \begin{align*}
    \frac{1}{\delta} \log(\eta_B+\eta_S \tilde{S}_0) &= \frac{1}{\delta}\log(\eta_B+\eta_S S_0 g(y)/y)  \\
    &= \frac{1}{\delta}\log(\eta_B+\eta_S S_0)
      + \frac{1}{\delta}\log\left(1+\theta \left(\frac{g(y)}{y}-1\right)\right)  \\
    &= \frac{1}{\delta}\log(\eta_B+\eta_S S_0) + O(\lambda).
  \end{align*}
  The last line follows after expanding $\log$ into a Taylor series, and
  inserting $y=c/\bar{c}$ and the series for~$g$ and~$c$ from the proof
  of Lemma~\ref{le:existence}.
  As for the second term in~\eqref{eq:u repr}, we use the series for~$w$ found in the proof
  of Theorem~\ref{thm:u ODE} and obtain
  \[
    w(y) = \frac{1}{\delta^2}\left(\frac{\mu^2}{2\sigma^2}-\delta\right) + \frac{\log \delta}{\delta}
    -  \frac{3^{2/3}(\mu(\sigma^2-\mu))^{4/3}}{4\cdot 2^{1/3} \delta^2 \sigma^{10/3}}\lambda^{2/3}
    +O(\lambda).
  \]
  Adding both expansions yields~\eqref{eq:u expans}.
\end{proof}

Note that the expansion~\eqref{eq:u expans} can easily be refined to a bivariate expansion
in terms of~$\lambda^{1/3}$ and $\pi_0-\theta$, valid for $\lambda>0$ sufficiently
small and~$\pi_0 \in [\underline{\theta},\overline{\theta}]$ inside the
no-trade region. Its first terms are identical to~\eqref{eq:u expans};
$\pi_0-\theta$ only appears from the $O(\lambda^{4/3})$-term onwards.

\section{Solution of the dual problem}
We conclude by showing that the shadow price $\tilde{S}$ and the density process of the corresponding martingale measure are a solution of the dual problem from Definition~\ref{def:dual}.

\begin{theorem}\label{thm:dual}
Let $Y$ and $\tilde{S}$ be defined as Lemma \ref{thm:shadow}, and set

\[
  Z=\scr{E}\left(\int_0^\cdot -\frac{\tilde{\mu}(Y_t)}{\tilde{\sigma}(Y_t)}dW_t\right).
\]
Then 
$$u_{(1-\lambda)S,S}(\eta_B,\eta_S)=v_{(1-\lambda)S,S}(\eta_B,\eta_S),$$
i.e., there is no duality gap in the market with bid-ask process $((1-\lambda)S,S)$, and $(\tilde{S},Z)$ is a dual optimizer.
\end{theorem}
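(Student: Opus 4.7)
The plan is to evaluate the dual objective at the candidate $(\tilde{S},Z)$ via a direct Fenchel identity, recognizing the optimal consumption from Lemma~\ref{thm:shadow} as the classical log-optimal consumption in the frictionless shadow market, and then close the loop with weak duality.

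The key preliminary identification is
\[
  \tilde{V}_t=(\eta_B+\eta_S\tilde{S}_0)\,e^{-\delta t}/Z_t,\qquad \kappa_t=Ke^{-\delta t}/Z_t,
\]
where $K:=\delta(\eta_B+\eta_S\tilde{S}_0)$. This follows directly from the construction in Lemma~\ref{thm:shadow}: substituting $\tilde{\pi}=\tilde{\mu}/\tilde{\sigma}^2$ into the stochastic exponential defining $\tilde{V}$ and simplifying shows that it reduces to $e^{-\delta t}/Z_t$ with $Z$ as given. Moreover, since the market price of risk $\tilde{\mu}/\tilde{\sigma}$ is bounded on $[1,\bar{s}]$, Novikov's criterion makes $Z$ a true positive martingale with $Z_0=1$, and Girsanov together with the bounded volatility $\tilde{\sigma}$ makes $\tilde{S}Z$ a true martingale; combined with Proposition~\ref{prop:shadow}, this confirms that $(\tilde{S},Z)$ is admissible for the dual.

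For log utility $U(t,x)=e^{-\delta t}\log x$, the Fenchel inequality $U(t,x)\leq\tilde{U}(t,y)+xy$ becomes an equality exactly when $x=e^{-\delta t}/y$, which is precisely the relation between $\kappa_t$ and $y=Z_t/K$ just established. Since $\kappa_tZ_t/K=e^{-\delta t}$ is deterministic, integrating this pointwise Fenchel identity in $t$ and taking expectations yields
\[
  E\!\left(\int_0^\infty e^{-\delta t}\log\kappa_t\,dt\right)=E\!\left(\int_0^\infty\tilde{U}(t,Z_t/K)\,dt\right)+\frac{1}{\delta},
\]
so the dual value at $(\tilde{S},Z)$ equals the expected utility attained by the optimal portfolio/consumption process, which in turn equals $u_{(1-\lambda)S,S}(\eta_B,\eta_S)$ by Theorem~\ref{thm:shadow2}. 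Standard weak duality in the transaction-cost market---proved via the Fenchel inequality together with the budget bound $E(\int_0^\infty\kappa_t\bar{Z}_t\,dt)\leq\eta_B+\eta_S\bar{S}_0$, valid for any consistent price system $(\bar{S},\bar{Z})$ because a bid-ask-admissible strategy remains admissible in any frictionless shadow market $\bar{S}\in[(1-\lambda)S,S]$---supplies the reverse inequality $u_{(1-\lambda)S,S}\leq v_{(1-\lambda)S,S}$. Sandwiching $u_{(1-\lambda)S,S}\leq v_{(1-\lambda)S,S}\leq$ (dual at $(\tilde{S},Z)$) $=u_{(1-\lambda)S,S}$ forces equality and identifies $(\tilde{S},Z)$ as a dual minimizer. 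The main conceptual step is the classical log-optimal identification $\tilde{V}_t=(\eta_B+\eta_S\tilde{S}_0)e^{-\delta t}/Z_t$; once this is in place, strong duality is an algebraic consequence of the Fenchel equality, the remaining subtleties being the integrability and martingale conditions needed to justify Novikov, Girsanov and the weak-duality budget bound.
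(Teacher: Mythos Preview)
Your argument is correct, and it takes a more hands-on route than the paper's proof. The paper establishes the chain
\[
  u_{(1-\lambda)S,S}\leq v_{(1-\lambda)S,S}\leq v_{\tilde{S},\tilde{S}}=u_{\tilde{S},\tilde{S}}=u_{(1-\lambda)S,S},
\]
outsourcing both weak duality and the frictionless strong duality $v_{\tilde{S},\tilde{S}}=u_{\tilde{S},\tilde{S}}$ to general references (\cite{kallsen.goll.00}, \cite{kallsen.muhlekarbe.10}, and Karatzas--Shreve's complete-market duality theorem). You instead compute the dual value at $(\tilde{S},Z)$ directly, via the log-specific identity $\tilde{V}_t=(\eta_B+\eta_S\tilde{S}_0)e^{-\delta t}/Z_t$, which turns the Fenchel inequality into an equality and immediately yields that the dual objective at $(\tilde{S},Z)$ equals the primal optimum.

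The trade-off is transparency versus generality: your approach is self-contained and makes explicit why $(\tilde{S},Z)$ is the dual optimizer (it is the Fenchel conjugate point of the optimal consumption at every $(t,\omega)$), but it leans on the closed-form log-optimal wealth process and would not transfer verbatim to power utility. The paper's version hides the computation inside the cited frictionless duality theorem, which in this complete It\^o setting amounts to exactly the same Fenchel-equality calculation you spell out. Your verification that $(\tilde{S},Z)$ is an admissible consistent price system (boundedness of $\tilde{\mu}/\tilde{\sigma}$ on $[1,\bar{s}]$, hence Novikov for $Z$ and the $Q$-martingale property of $\tilde{S}$) is a point the paper leaves implicit but which is indeed needed.
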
 

\begin{proof}
First notice that for any continuous semimartingale $\bar{S}$ with values in the bid-ask spread $[(1-\lambda)S,S]$, it follows as in the proof \cite[Theorem 4.6]{kallsen.muhlekarbe.10} that
$$u_{(1-\lambda)S,S}(\eta_B,\eta_S) \leq u_{\bar{S},\bar{S}}(\eta_B,\eta_S).$$
Together with, e.g., \cite[Lemma 2.3]{kallsen.goll.00}, this implies
$$u_{(1-\lambda)S,S}(\eta_B,\eta_S) \leq v_{(1-\lambda)S,S}(\eta_B,\eta_S).$$
For the shadow price $\tilde{S}$, we have $v_{(1-\lambda)S,S}(\eta_B,\eta_S) \leq v_{\tilde{S},\tilde{S}}(\eta_B,\eta_S)$ by definition and $u_{\tilde{S},\tilde{S}}(\eta_B,\eta_S)=u_{(1-\lambda)S,S}(\eta_B,\eta_S)$ by Theorem \ref{thm:shadow2}.  It therefore remains to show
$$v_{\tilde{S},\tilde{S}}(\eta_B,\eta_S)=u_{\tilde{S},\tilde{S}}(\eta_B,\eta_S),$$
i.e., that there is no duality gap in the frictionless market with price process $\tilde{S}$. As the frictionless market with price process $\tilde{S}$ is standard and complete in the sense of \cite[Definition 1.7.3]{karatzas.shreve.98}, this follows from \cite[Theorem 3.9.12]{karatzas.shreve.98} and we are done.
\end{proof}

\bibliographystyle{acm}
\bibliography{functionalshadow}

\end{document}